\def\BibTeX{{\rm B\kern-.05em{\sc i\kern-.025em b}\kern-.08em
    T\kern-.1667em\lower.7ex\hbox{E}\kern-.125emX}}
\definecolor{nblue}{rgb}{0,0.263,0.576}
\declaretheoremstyle[
  headfont=\color{black}\normalfont\bfseries,
  bodyfont=\normalfont\itshape,
]{colored}
\declaretheorem[
  style=colored,
  name=Proposition,
]{prop}
\declaretheorem[
  style=colored,
  name=Definition,
]{definition}
\declaretheorem[
  style=colored,
  name=Assumption,
]{assumption}
\declaretheorem[
  style=colored,
  name=Lemma,
]{lemma}
\begin{document}
\title{Misinformation Regulation in the Presence of Competition between Social Media Platforms (Extended Version)}

\author{So Sasaki, Cédric Langbort
\thanks{Manuscript received January 1, 2023; revised January 2, 2023; accepted January 3, 2023. This work was supported by an ARO MURI award, under agreement W911NF-20-1-0252. }
\thanks{S. Sasaki and C. Langbort are with Coordinated Science Laboratory, 
University of Illinois at Urbana-Champaign, Urbana, IL 61801 USA (e-mail: ssasaki2@illinois.edu, langbort@illinois.edu).}}

\maketitle

\begin{abstract}
Social media platforms have diverse content moderation policies, with many prominent actors hesitant to impose strict regulations. A key reason for this reluctance could be the competitive advantage that comes with lax regulation. A popular platform that starts enforcing content moderation rules may fear that it could lose users to less-regulated alternative platforms. Moreover, if users continue harmful activities on other platforms, regulation ends up being futile. This article examines the competitive aspect of content moderation by considering the motivations of all involved players (platformer, news source, and social media users), identifying the regulation policies sustained in equilibrium, and evaluating the information quality available on each platform. Applied to simple yet relevant social networks such as stochastic block models, our model reveals the conditions for a popular platform to enforce strict regulation without losing users. Effectiveness of regulation depends on the diffusive property of news posts, friend interaction qualities in social media, the sizes and cohesiveness of communities, and how much sympathizers appreciate surprising news from influencers.
\end{abstract}

\begin{IEEEkeywords}
Content moderation, Network game theory, Platform competition, Social media 
\end{IEEEkeywords}

\section{Introduction}
\vspace{-0.1cm}

In recent years, social media platforms have significantly changed their content moderation policies.
For instance, Jhaver et al. \cite{jhaver2021evaluating} reported that Pinterest recently started actively blocking some search results for controversial or debunked queries, while Reddit has taken action against certain toxic communities by banning or quarantining them. 
Meta/Facebook has also identified and flagged groups that engage in hate speech on its platform. 
Different platforms have implemented widely varying policies, resulting in different standards for what is considered acceptable speech.
One of the most visible examples of this was the different treatment of President Trump's May 2020 messages by Twitter and Facebook, as reported by \cite{washpo} and \cite{wired}.

The difference in content moderation policies between social media platforms is likely influenced by a combination of factors, including each company's interpretation of \S 230 of the US Communications Decency Act and the founders' views on freedom of speech. 
However, it is also probable that economic incentives play a role in shaping these policies. 
In fact, a platform's regulation, or lack thereof, can be perceived as a competitive advantage, as users may choose to switch to a competitor if they feel that the moderation rules in place amount to censorship and limit their access to the news and content they value. 
This effect can be amplified in social networking markets, where the positive externalities of network economics are typically present. 
As more users join a particular platform, it becomes even more socially advantageous for others to join as well, resulting in a winner-takes-all effect, as has been well documented in the literature, such as  \cite{Shapiro1999Information}. Such a threat of seeing its users leave \textit{en masse} may play a role in reducing a platformer's interest in policing misinformation on its network. At the same time, other actors, such as companies advertising on the network, may value regulation (e.g., as a form of so-called 'brand safety' \cite{remi, madio2020user}) and introduce incentives that, in contrast,  may push the platformer towards more oversight.

The aim of this article is to investigate the competitive nature of content moderation on simplified platform models 
by considering the motivations of different players such as the platform owner, news sources, and social media users. We purposefully omit the kind of pro-regulation stakeholders mentioned above in this first study. This is because, in the context of our model, they only make it easier for the platform to implement strict regulation, and we are interested in characterizing the most restrictive such policy that can be sustained in equilibrium.

More precisely, in our model, we examine the effect of regulation when multiple users and a news source (also referred to as an influencer or sender) choose platforms to maximize their individual utility.
A user's utility is determined by their social interaction payoff and news consumption payoff.
Platform owners have the option to deplatform a deceptive news source based on their policies.
As a result, the news source must decide whether to comply with the strict regulation of a popular platform or relocate to an alternative platform to spread more biased information to a smaller audience.

This study is part of a larger body of research on platform governance, as discussed in reviews \cite{Rietveld2020PlatformCA} and \cite{gorwa2019platform}. 
We specifically examine how platforms can prevent the spread of misinformation, such as through the preclusion of accounts. 
Twitter account suspension following the 2020 U.S. presidential election is examined in \cite{chowdhury2021examining}.
Our model also takes into account how users and news sources react to the regulations. 
The way users perceive and interpret content moderation is studied closely in \cite{myers2018censored}. 
Meanwhile, Horta Ribeiro et al. \cite{horta2021platform} investigate how moderation policies affect harmful activity in the wider web ecosystem, and whether this activity becomes more radicalized on alternative platforms. 
Additionally, Innes and Innes \cite{innes2021platforming} examine the intended and unintended consequences of deplatforming interventions, such as the emergence of ``minion accounts'' of a banned account. 

Other recent works concerned with regulation in information diffusion include \cite{tzoumas2012game, bayiz2022countering, {damonte2022targeting}}, all of which place a heavier emphasis on the process underlying dynamic propagation and/or the formation of opinions. 
Little research has been conducted, however, in regard to the intensity of effective regulation that mainstream platforms can implement without losing users.
Understanding this fundamental aspect can advance our conversation on the social responsibilities of tech companies.

This article improves our previous work \cite{sasaki2022necsys} in that 
1) we give tighter sufficient conditions in the propositions for simple social network structures 
and 2) more realistic network structures (finite networks and random networks with community structures) are investigated.
In addition, 3) our analysis is extended to networks consisting of heterogeneous users (sympathizers and non-sympathizers).
These improvements allow our model to analyze e.g., how regulation should be adjusted depending on community sizes, in-community cohesiveness, inter-community structures, and sympathizers near an influencer.

The rest of the paper proceeds as follows.
First, we introduce our model of information diffusion and regulation on a single platform. 
Next, we describe the competitive process of platform migration among the sender and users, and characterize the strictest regulation that can be imposed in equilibrium. 
We then relate its properties to various network structures.
Lastly, the analysis scope is extended to heterogeneous users.

\section{Model}

\subsection{News information from sender}

The game players are multiple users and a sender, who choose platform $\mathbf{P}\in\{\mathbf{A}, \mathbf{B}\}$.
The sender's (respectively user $i$'s) choice of platform $\mathbf{A}$ or $\mathbf{B}$ is denoted as $\mathbf{P}_S$ ($\mathbf{P}_i$).

The sender’s goal is to persuade as many users as possible to adopt a specific belief about the state of the world, while accounting for users' platform choices and updates of beliefs in response to its signaling policy. 
The sender thus acts as a propagandist, whose decision-making process is best described using the framework of Bayesian persuasion \cite{Kamenica2009BayesianP}, as is done in \cite{egorov2020persuasion}.

The world state is represented by a random variable $w\in\{0,1\}$ with $\mu=Prob(w=1)<1/2$.
User $i$ estimates the world state as $a_i\in\{0,1\}$ and gets payoff $c$ if $a_i=w=0$, payoff $1-c$ if $a_i=w=1$, or no payoff if $a_i\neq w$.
The assumption is made that $\mu<c<1/2$, implying that the user's optimal estimation is $a_i=0$ in absence of any news information.
The sender is the only one who can observe the state of the world, and their objective is to have the user estimate $a_i=1$ regardless of the world state.
To persuade users, the sender sends signal $s\in\{0,1\}$.
However, the choice of signal is not entirely arbitrary, and the sender selects a level of deceitfulness, denoted as $\beta\in[0,1]$.
The signal $s$ is generated probabilistically, such that $Prob(s=1|w=0)=\beta, Prob(s=1|w=1)=1$.
This means that the sender reports the true news when the state of the world is $w=1$, but occasionally lies when the world state is unfavorable. 
User $i$ in platform $\mathbf{P}$ receives the signal with probability of $p_{i\mathbf{P}}$, and then makes their estimation $a_i$ based on the sender's strategy $\beta$ and the realized signal $s$.

Now let us examine the optimal strategy for user $i$ with respect to their estimation $a_i$.
If the user does not receive any signal, they should choose the default optimal estimation of $a_i=0$.
If they receive signal $s=0$, the biased sender is reporting unfavorable news and the world state is without a doubt $w=0$.
However, if they receive signal $s=1$, which could potentially be misleading, a more careful analysis is needed.
In this case, the expected payoff for the user when choosing $a_i=1$ is given by 
\begin{equation}
    Prob(w=1|s=1)(1-c) = \frac{\mu}{\mu+(1-\mu)\beta}(1-c),
\end{equation}
while for $a_i=0$, it is given by
\vspace{-0.2cm}
\begin{equation}
    Prob(w=0|s=1)c = \frac{(1-\mu)\beta}{\mu+(1-\mu)\beta}c.
\end{equation}
Therefore, the user should trust signal $s=1$ and select $a_i=1$ if the sender's deceitfulness satisfies
$\displaystyle{    \beta \le \frac{\mu(1-c)}{(1-\mu)c} =: \beta'. }$
Otherwise, they should distrust signal $s=1$ and select $a_i=0$.
Overall, the user's optimal strategy when receiving a signal is to trust any signal ($a_i=s$) if $\beta\le\beta'$, and to ignore any signal ($a_i=0$) otherwise.
Therefore, their expected payoff (for estimating the world state correctly) in platform $\mathbf{P}$ is given by 
\begin{equation}
    \Psi_{i\mathbf{P}} = 
    \begin{cases}
        \mu p_{i\mathbf{P}} (1-c) + (1-\mu)(1-\beta p_{i\mathbf{P}})c & \text{if} \beta\le\beta'\\
        (1-\mu)c & \text{if} \beta>\beta'
    \end{cases}
\end{equation}
We call $\Psi_{i\mathbf{P}}$ news consumption payoff.
The sender's expected utility $U$ is the total number of users who choose $a_i=1$, which can be expressed as
\begin{equation}\label{U_basic}
\begin{array}{ll}
    U & = \mathbb E\bigg[\sum_i a_i\bigg]  = \sum_i (\mu + (1-\mu)\beta)p_{i\mathbf{P}_i},   
\end{array}
\end{equation}
if $\beta\le\beta'$.
Otherwise, the sender fails to persuade the users, and therefore $U=0$.

Before introducing social-network aspects of our model, let us consider a scenario with only one user and one platform, ${\mathbf{A}=\mathbf{P}_i=\mathbf{P}_S}$, and see the implications of the setting so far.
If the sender chooses $\beta\le\beta'$, their utility will be
$U = (\mu + (1-\mu)\beta)p_{i\mathbf{A}}.$
Since the utility function increases monotonically for $\beta\in[0,\beta']$, 
the sender's optimal strategy is $\beta^*=\beta'$.
They prefer a deceitful strategy as long as the user trusts them.

Suppose the platform sets a regulation $\rho_\mathbf{A}$ that prohibits the sender from using a $\beta$ value greater than $\rho_\mathbf{A}$.
This, for example, could be done by setting a content moderation guideline to de-platform a news source that reports misinformation often.
In the absence of an alternative platform, the sender must comply with the regulation and restrict their deceitfulness to the range $\beta\in[0,\rho_\mathbf{A}]$ instead of $\beta\in[0,1]$, because a de-platformed sender's utility would be zero.
A relatively strict regulation $\rho_\mathbf{A}<\beta'$ works effectively in reducing the sender's optimal strategy $\beta^*$ from $\beta'$ to $\rho_\mathbf{A}$, thus improving the quality of news information on the platform.
Conversely, a loose regulation with $\rho_\mathbf{A}\ge\beta'$ is meaningless, as the sender's optimal strategy remains at $\beta^*=\beta'$.

\subsection{Social Interaction and Platform Adoption}

Users and the sender are connected via an undirected graph, where edges represent friend relationships. Given choices $\mathbf{P}_i$'s and $\mathbf{P}_S$, this graph induces two subgraphs containing actors that are on the same platform. We assume that the signal originating from the sender travels along the edges of the subgraph corresponding to the sender's platfom choice and that user $i$ receives that message with probability $p_{i\mathbf{P}_i}=p^k$ if and only if $\mathbf{P}_i=\mathbf{P}_S$, and $i$ is $k$ edges away from $S$ in that subgraph. 
We call parameter $p$ information diffusiveness.
Works such as \cite{ediger2010massive} have shown that news dissemination follows tree-like broadcast patterns.
To account for this, we assume that the receiving probability depends on the shortest path length in the subgraph.


Note that the probability $p_{i\mathbf{P}_i}$ depends on other users' choices of platforms, and a user in platform $\mathbf{P}_i\neq \mathbf{P}_S$ does not receive the signal.
See Figure \ref{fig:illustration} for an illustrated example.

\begin{figure}
\begin{center}
\includegraphics[width=8.8cm]{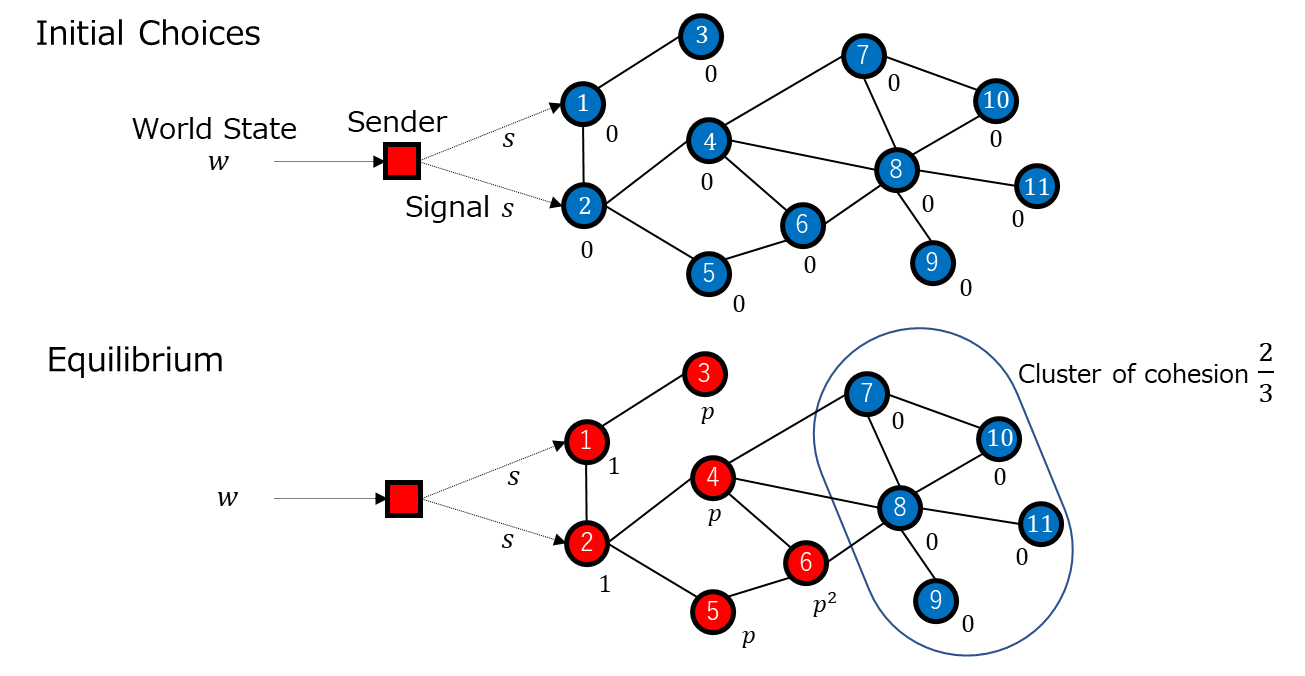}
\vspace{-.4cm}
\caption{Illustration of our model. Blue (red) nodes are in platform A (B). Values below nodes present $p_{i\mathbf{P}_i}$.
Note that user 6 is assumed to receive the source's signal with probability $p^2$ (because it is two edges away from it) even though there are two paths connecting these vertices. See text for details.
} 
\label{fig:illustration}
\end{center}
\end{figure}

The number of user $i$'s friends in platform $\mathbf{P}$ (i.e., the size of their neighborhood in the corresponding subgraph) is denoted by $N_{i\mathbf{P}}$. The payoff derived from social interactions with these friends is 
$\Phi_{i\mathbf{P}_i} = N_{i\mathbf{P}_i}b_{\mathbf{P}_i}$, where $b_\mathbf{P}$ is a parameter, controlled by $\mathbf{P}$, which determines the reward obtained for each social interaction in platform $\mathbf{P}$. The value of $b_\mathbf{P}$ should be thought of as increasing with every service or feature (e.g., Story/Fleet feature, birthday notification...) introduced by the platform to enhance a user's experience interacting with peers. 
We call $b_\mathbf{P}$ the quality of social interaction.
In real life, some friends may be more important than others, but the uniform $b_\mathbf{P}$ for all neighbors is a typical assumption in the literature on networked agents’ adoption of a new option with positive externality (e.g., \cite{Morris2000Contagion}).
All in all, user $i$'s utility in platform $\mathbf{P}$ is given by
\begin{equation}
\label{uti}
    V_{i\mathbf{P}} 
    = \Phi_{i\mathbf{P}} + \Psi_{i\mathbf{P}}.
\end{equation}
Given the sender's decision on $\mathbf{P}_S$ and $\beta\in[0,\beta']$, user $i$ should choose  $\mathbf{P}_i$ such that $V_{i\mathbf{P}_i}\ge V_{i\mathbf{P}}$ for all $\mathbf{P}\in\{\mathbf{A},\mathbf{B}\}$.
We say the users' choices of platforms are in equilibrium if this condition is satisfied for all users.
A trivial equilibrium is $ \mathbf{P}_i=\mathbf{P}_S$ for all $i$, but there can be other equilibria.

Since we are interested in situations where a pre-existing, originally ``dominant'', platform ($\mathbf{A}$) imposes regulations in the presence of a competing one ($\mathbf{B}$), it makes sense to specifically look for equilibria that occur as the limit of an ``adoption process'' carried out by the users. 
The process is described as follows.
First, the sender chooses $\mathbf{P}_S$.
Then, from the initial state with all users in $\mathbf{A}$, users repeatedly update their choices. 
In each iteration, users choose $\mathbf{P}_i=\operatorname{argmax}_\mathbf{P}{V_{i\mathbf{P}}}$ simultaneously, based on other users' previous choices.
Iterations continue until an equilibrium is reached, if any.
We will discuss the convergence of this process next, noting for now that it is essentially a fictitious play-iteration (see, e.g., \cite{Morris2000Contagion}) for the non-cooperative game played by the users with utilities $V_{i\mathbf{P}}$, once the sender’s choice is enacted.

\begin{prop}\label{converge_acyclic}
If 1) the social network is acyclic and platform $\mathbf{A}$ provides higher social interaction quality than platform B, i.e., $b_\mathbf{A} > b_\mathbf{B}$,
or 2) the social network is a finite graph,
then the adoption process converges to an equilibrium in a finite number of iterations. 
\end{prop}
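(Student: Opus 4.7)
The plan is to handle both cases through a single \emph{monotonicity lemma}: the sequence of $\mathbf{B}$-adopter sets $B^{(t)}:=\{i\colon \mathbf{P}_i^{(t)}=\mathbf{B}\}$ satisfies $B^{(t-1)}\subseteq B^{(t)}$ for every $t\ge 1$. The key observation is that, with $\mathbf{P}_S$ fixed, the utility differential $V_{i\mathbf{B}}-V_{i\mathbf{A}}$ is a non-decreasing function of the current $\mathbf{B}$-adopter set. The social component $N_{i\mathbf{B}}b_\mathbf{B}-N_{i\mathbf{A}}b_\mathbf{A}$ clearly rises as friends of $i$ defect to $\mathbf{B}$; and since $\mu(1-c)-(1-\mu)\beta c\ge 0$ whenever $\beta\le\beta'$, $\Psi_{i\mathbf{P}}$ is non-decreasing in $p_{i\mathbf{P}}$, while enlarging the $\mathbf{P}_S$-subgraph can only shorten (or create) the shortest path from $S$ to $i$ and shrinking the other subgraph has the reverse effect, so $\Psi_{i\mathbf{B}}-\Psi_{i\mathbf{A}}$ is also non-decreasing. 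A one-line induction from $B^{(0)}=\emptyset$ then gives the lemma: any $i\in B^{(t)}$ had $V_{i\mathbf{B}}\ge V_{i\mathbf{A}}$ evaluated on the iteration-$(t-1)$ state, and the monotonicity combined with the inductive hypothesis $B^{(t-1)}\subseteq B^{(t)}$ preserves this inequality on the iteration-$t$ state, forcing $i\in B^{(t+1)}$.

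Case 2 then follows at once: a nested increasing sequence of subsets of a finite vertex set must stabilize in at most $|V|$ iterations, and any such fixed point of the simultaneous best-response dynamics satisfies the equilibrium condition by construction. The same lemma also disposes of case 1 whenever $\mathbf{P}_S=\mathbf{A}$, since at the initial state $b_\mathbf{A}>b_\mathbf{B}$ together with $\Psi_{i\mathbf{A}}\ge(1-\mu)c=\Psi_{i\mathbf{B}}$ yields $V_{i\mathbf{A}}\ge V_{i\mathbf{B}}$ for every $i$, so $B^{(1)}=\emptyset$ and monotonicity keeps it empty at every subsequent iteration.

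The delicate situation is case 1 with $\mathbf{P}_S=\mathbf{B}$, where the tree may be infinite; here I would deploy a wave argument rooted at $S$. Uniqueness of paths on a tree forces $p_{i\mathbf{B}}>0$ to require that the entire ancestral chain from $i$ up to $S$ be already on $\mathbf{B}$, so the $\mathbf{B}$-adoption wave can advance by at most one hop per iteration. At the first iteration at which a depth-$d$ user $i$ finds its (unique) parent on $\mathbf{B}$, $i$'s descendants are still one hop behind the wavefront, hence $N_{i\mathbf{B}}=1$ and the switch condition reduces to
\[b_\mathbf{B}-(N_i-1)b_\mathbf{A}+p^d[\mu(1-c)-(1-\mu)\beta c]>0,\]
where $N_i$ denotes $i$'s total degree. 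For any non-leaf ($N_i\ge 2$) this fails once $p^d\mu(1-c)<b_\mathbf{A}-b_\mathbf{B}$, i.e.\ beyond a finite depth $D=O(\log_{1/p}(\mu(1-c)/(b_\mathbf{A}-b_\mathbf{B})))$; leaves deeper than $D$ are severed from $S$ in the $\mathbf{B}$-subgraph by their non-moving parent and thus see no news gain, so $b_\mathbf{A}>b_\mathbf{B}$ keeps them on $\mathbf{A}$. Combined with the monotonicity lemma (to rule out any reversal), $B^{(t)}$ therefore stabilizes within $D+O(1)$ iterations. The main obstacle is exactly this ordering subtlety: one must verify that when a candidate non-leaf first has positive incentive to defect, none of its descendants has already moved, so that the pessimistic $N_{i\mathbf{B}}=1$ bound actually binds — the unique-path property of trees together with the simultaneous update rule pin down precisely this ``lag-by-one'' ordering between a parent and its children.
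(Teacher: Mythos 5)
Your proposal is correct and follows essentially the same route as the paper: a monotonicity lemma (users move only from $\mathbf{A}$ to $\mathbf{B}$), a one-hop-per-iteration wave lemma exploiting unique paths in the acyclic case, the observation that at the wavefront $N_{i\mathbf{B}}=1$ so only leaf nodes can still switch once $p^d\mu(1-c)<b_\mathbf{A}-b_\mathbf{B}$, and the nested-finite-subsets argument for Condition 2. The only difference is cosmetic: you extract an explicit stopping depth $D$ where the paper argues via ``sufficiently large $t$'' and an $\epsilon$ bound.
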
 

This is the sole equilibrium we will consider going forward, when saying that a property holds ``at (the) equilibrium.''

\begin{proof}
We will prove the result for Condition 1) only, as the proof for 2) is similar.
If $\mathbf{P}_S=\mathbf{A}$, the initial state is the equilibrium.
So we only need to consider $\mathbf{P}_S=\mathbf{B}.$

As shown in the Appendix, users can switch from platform $\mathbf{A}$ to $\mathbf{B}$ but not from $\mathbf{B}$ to $\mathbf{A}$.
It is also shown that for an acyclic graph, user $i$ that switches to platform $\mathbf{B}$ at iteration $t$ is $t$ edges away from the sender and has $p_{i\mathbf{B}}^{(t)}=p^t$, where $x^{(t)}$ denotes the value of $x$ at iteration $t$. (This implies that such user $i$'s neighbor who is $t-1$ edges away from the sender has to switch to $\mathbf{B}$ at iteration $t-1$. It  also implies that if a user $k$ edges away does not switch to $\mathbf{B}$ at iteration $k$, then the user and its downstream users do not switch to $\mathbf{B}$ forever.)

Suppose user $i$ switches to $\mathbf{B}$ at an iteration of sufficiently large $t$.
Since $p_{i\mathbf{A}}^{(t)}=0, p_{i\mathbf{B}}^{(t)}=p^t$, we have $\Psi_{i\mathbf{B}}^{(t)} < \Psi_{i\mathbf{A}}^{(t)} + \epsilon$ for small $\epsilon>0$.
Since user $i$'s neighbors farther from the sender are still in platform $\mathbf{A}$ and the neighbor closer to the sender is in $\mathbf{B}$,
$N_{i\mathbf{B}}^{(t)}=1$.
Since $b_\mathbf{B}<b_\mathbf{A}$, we have $\Phi_{i\mathbf{B}}^{(t)}<\Phi_{i\mathbf{A}}^{(t)}$
if $N_{i\mathbf{A}}^{(t)}\ge 1$.
Therefore, $V_{i\mathbf{B}}^{(t)}<V_{i\mathbf{A}}^{(t)}$
if $N_{i\mathbf{A}}^{(t)}\ge 1$.
Since user $i$ adopts $\mathbf{B}$ at iteration $t$ 
(i.e., $V_{i\mathbf{B}}^{(t)}\ge V_{i\mathbf{A}}^{(t)}$),
this means $N_{i\mathbf{A}}^{(t)}=0$.
Therefore, user $i$ is a leaf node.
Hence, no user switches to $\mathbf{B}$ at iteration $\tau \ge t+1$.

\end{proof}

\section{Strictest Effective Regulation \label{Sec3}}

We now focus on the scenario where platforms have the ability to regulate the level of deceitfulness of the sender.
Specifically, we assume that platform $\mathbf{A}$ enforces a restriction $\rho_\mathbf{A} \in [0,1]$ on the sender's choice of $\beta$, limiting the range to $[0,\rho_\mathbf{A}]$. In contrast, $\mathbf{B}$ has no such restriction, i.e., $\rho_\mathbf{B}=1$, and the sender can choose any $\beta\in[0,1]$ if they choose  $\mathbf{B}$. In practice, such regulation can be implemented if the source has been present on the platform long enough to enable it to fact-check its messages over a period of time, thus obtaining an empirical estimate of $\beta$.
\begin{definition}
\label{def_eff}
Platform $\mathbf{A}$’s regulation $\rho_\mathbf{A}$ is said to be \textit{effective} if it decreases the sender’s deceitfulness in platform $\mathbf{A}$ at equilibrium.
The strictest effective regulation $\rho_{SE}$ is the minimum of effective regulation $\rho_\mathbf{A}$.
\end{definition}

In other words, Definition \ref{def_eff} states that $\rho_\mathbf{A}$ is effective if in the presence of the constraint $\beta \in [0,\rho_\mathbf{A}]$, (1) the sender stays in $\mathbf{A}$ and (2) the chosen value $\beta^*$ is less than if $\rho_\mathbf{A}$ were equal to 1 (i.e., $\beta^* < \beta'$).
Our primary interest lies in identifying the strictest effective regulation $\rho_{SE}$. 
With this regulation, all users remain in platform $\mathbf{A}$.

Assuming that either condition 1) or 2) in Proposition \ref{converge_acyclic} holds, we let $U_\mathbf{P}$ be the sender's expected utility in platform $\mathbf{P}$.
When users take the optimal strategies, the sender considers $U_\mathbf{P}$ as a function of $\beta$.
We denote $U^*_\mathbf{P}$ as the maximum value of $U_\mathbf{P}$ for $\beta\in [0,\rho_\mathbf{P}]$.

If the sender chooses platform $\mathbf{A}$, all users should choose $\mathbf{A}$ at the equilibrium.
If $\sum_i p_{i\mathbf{A}}<\infty$, according to (\ref{U_basic}), $U_\mathbf{A}$ increases linearly with $\beta\in[0, \min\{\rho_\mathbf{A}, \beta'\}]$.
Thus, the sender's optimal level of deceitfulness is $\beta^*=\min\{\rho_\mathbf{A}, \beta'\}$.
On the other hand, if $\sum_i p_{i\mathbf{A}}=\infty$, then $U^*_\mathbf{A}=\infty$ for any $\beta\in[0, \min\{\rho_\mathbf{A}, \beta'\}]$.

This observation allows us to characterize an effective regulation.
Firstly, if $\rho_\mathbf{A}\ge\beta'$, then the regulation has no impact on the sender's behavior as the sender would have chosen a lower level of deceitfulness.
In other words, the regulation is too lenient to be effective.
Secondly, if regulation $\rho_\mathbf{A}<\beta'$ satisfies $U_\mathbf{A}(\rho_\mathbf{A})\ge U^*_\mathbf{B}$, then the sender should remain in platform $\mathbf{A}$ since it would not gain more utility in platform $\mathbf{B}$.
Note that when $\rho_\mathbf{A}$ takes the lowest value that satisfies the inequality, we have
\begin{equation}
    (\mu + (1-\mu)\rho_\mathbf{A})\sum_i p_{i\mathbf{A}} = U^*_\mathbf{B},
\end{equation}
where $p_{i\mathbf{A}}$ corresponds to $\mathbf{P}_S=\mathbf{A}$.
Lastly, if regulation $\rho_\mathbf{A}<\beta'$ satisfies $U_\mathbf{A}(\rho_\mathbf{A}) < U^*_\mathbf{B}$, then the sender should switch to $\mathbf{B}$ for higher utility.
The regulation is too strict, and the sender would prefer to move to the alternative platform with some of its followers.

Rearranging these cases from the viewpoint of designing the strictest effective regulation $\rho_{SE}$ in platform $\mathbf{A}$, we have the following proposition.

\begin{prop}\label{strictest_regulation}
If $U_\mathbf{A}(\beta')\le U^*_\mathbf{B}$, effective regulation does not exist.
If $U_\mathbf{A}(0) > U^*_\mathbf{B}$, platform $\mathbf{A}$ can enforce any strict regulation effectively, i.e., $\rho_{SE}=0$. 
If $U_\mathbf{A}(0) \le U^*_\mathbf{B} < U_\mathbf{A}(\beta')$, regulation should be moderate and 
\begin{equation}\label{strictest_regulation_formula}
    \rho_{SE}=\frac{1}{1-\mu}\bigg(\frac{U^*_\mathbf{B}}{\sum_i p_{i\mathbf{A}}}-\mu\bigg),
\end{equation}
where $p_{i\mathbf{A}}$ is for $\mathbf{P}_S=\mathbf{A}$.
\end{prop}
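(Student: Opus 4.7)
The plan is to exploit the monotonicity of $U_\mathbf{A}$ in $\beta$ already observed in the discussion preceding the proposition, and then split into three cases based on where $U^*_\mathbf{B}$ falls relative to the interval $[U_\mathbf{A}(0), U_\mathbf{A}(\beta')]$. The key characterization I would first extract is that a regulation $\rho_\mathbf{A}$ is effective if and only if (a) $\rho_\mathbf{A}<\beta'$, so that the bound is binding and induces $\beta^{*}=\rho_\mathbf{A}<\beta'$, and (b) the sender still prefers $\mathbf{A}$, i.e., $U_\mathbf{A}(\rho_\mathbf{A})\ge U^*_\mathbf{B}$. Condition (b) leans on the fact, spelled out in the paragraph before the proposition, that once the sender stays on $\mathbf{A}$ the adoption process of Proposition \ref{converge_acyclic} leaves all users on $\mathbf{A}$, so $p_{i\mathbf{A}}$ is the ``all on $\mathbf{A}$'' value and $U_\mathbf{A}(\beta)=(\mu+(1-\mu)\beta)\sum_i p_{i\mathbf{A}}$ is affine and strictly increasing on $[0,\beta']$.

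With this characterization in hand, the set of effective regulations is exactly $\{\rho_\mathbf{A}\in[0,\beta'):U_\mathbf{A}(\rho_\mathbf{A})\ge U^*_\mathbf{B}\}$, and the three cases follow from comparing $U^*_\mathbf{B}$ to the endpoint values of $U_\mathbf{A}$ on $[0,\beta']$. If $U_\mathbf{A}(\beta')\le U^*_\mathbf{B}$, monotonicity forces $U_\mathbf{A}(\rho_\mathbf{A})<U^*_\mathbf{B}$ for every $\rho_\mathbf{A}<\beta'$, so the effective set is empty. If $U_\mathbf{A}(0)>U^*_\mathbf{B}$, every $\rho_\mathbf{A}\in[0,\beta')$ is effective and the infimum is $\rho_{SE}=0$. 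In the intermediate case $U_\mathbf{A}(0)\le U^*_\mathbf{B}<U_\mathbf{A}(\beta')$, continuity and strict monotonicity of $U_\mathbf{A}$ give a unique $\rho^{*}\in[0,\beta')$ with $U_\mathbf{A}(\rho^{*})=U^*_\mathbf{B}$; the effective set is then $[\rho^{*},\beta')$, and solving the linear equation $(\mu+(1-\mu)\rho^{*})\sum_i p_{i\mathbf{A}}=U^*_\mathbf{B}$ for $\rho^{*}$ yields formula (\ref{strictest_regulation_formula}).

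The main subtlety I anticipate is not algebraic but conceptual: justifying that the sender's platform choice really does reduce to the single scalar inequality $U_\mathbf{A}(\rho_\mathbf{A})\ge U^*_\mathbf{B}$, i.e., that $U^*_\mathbf{B}$ is a well-defined number independent of how the sender wiggles $\beta$ on $\mathbf{A}$. This relies on Proposition \ref{converge_acyclic}, which guarantees that after the sender picks a platform, the adoption process resettles into a deterministic equilibrium, so that $U^*_\mathbf{A}$ and $U^*_\mathbf{B}$ are genuinely comparable numbers. Once that point is accepted, the remainder of the argument is a routine one-variable monotone inversion and carries no further difficulty; the only bookkeeping I would still do is to note that in case 3 the resulting $\rho^{*}$ automatically lies in $[0,\beta')$ by the strict endpoint inequalities, so that $\rho_{SE}=\rho^{*}$ is indeed an admissible regulation.
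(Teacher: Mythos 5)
Your proposal is correct and follows essentially the same route as the paper, which likewise characterizes the effective regulations as $\{\rho_\mathbf{A}<\beta' : U_\mathbf{A}(\rho_\mathbf{A})\ge U^*_\mathbf{B}\}$ and reads off the three cases from the monotonicity of the affine map $\beta\mapsto(\mu+(1-\mu)\beta)\sum_i p_{i\mathbf{A}}$, solving $U_\mathbf{A}(\rho_{SE})=U^*_\mathbf{B}$ in the intermediate case. Your added remarks on attainment of the minimum and on why $U^*_\mathbf{B}$ is a well-defined comparison point (via Proposition \ref{converge_acyclic}) only make explicit what the paper leaves implicit.
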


\section{Discussion of model assumptions}

Having now presented the main features of our model, and before particularizing its high-level predictions to specific networks, we are in a good position to discuss its central assumptions and simplifications in more details.

\subsection{Multiple platforms}
The main insights of our model remain relevant in situations where more than two platforms are competing, with one of them being initially dominant. In fact, the two-platform model can be considered as the worst case for this dominant platform because the key factor of platform competition is the positive externality. If users can disperse to many alternative platforms, it is difficult for each one to become competitive against the dominant one. Therefore, the dominant platform might be able to enforce stricter regulation.

\subsection{Multihoming}
In its current form, our model assumes that users adopt one platform at a time. In reality, users do not necessarily have to leave one platform to join another and can, a priori, consume news and interact with peers on multiple sites contemporaneously. If such multihoming occurs, the sender's departure to another platform may not necessarily cause users to leave the dominant one, and the latter can in turn enforce \textit{any} regulation without any risk of losing its user base.

However, if user $i$ has a finite time and/or attention budget to allocate between the various sites they join, and if they do not explicitly value diversity of platforms, they will de facto spend all that budget on the platform bringing them highest utility, i.e., move to $\mathbf{P}$ such that $V_{i\mathbf{P}} \geq V_{i\mathbf{P}'} $ for all $\mathbf{P}'$, as we have assumed in Section \ref{Sec3}.

\subsection{Sender strategy and multiple senders}

Using a similar argument as the one we described for multihoming, one can contend that users do not have to get their news from a single source. In turn, if a specific source leaves the dominant platform, a user can just either continue relying on other sources or decide to adopt a new one. For this reasoning to hold, however, it is necessary that all news sources be seen as substitutable by the users.  Assuming otherwise (as we do) and, in particular, that the news consumption payoff enters a user's utility according to (\ref{uti}) thus implicitly posits that (a) they have already committed to a specific sender of interest, to the exclusion of all others and, (b) that this commitment is independent from the messages sent by the source. \\
These two points are consistent with the behavior of at least some categories of social media users,  whose main motivations for following an influencer are ``entertainment'' and ``out of habit'',  as uncovered by recent research in the field of social media studies \cite{Croes}. Conditions (a) and (b) are also likely to hold more generally if the news source has unique features that make it especially appealing or salient to users, even before receiving any message from it (e.g., if the source is a particularly visible public figure or media outlet, or a charismatic influencers).
Another way in which the users' ex-ante devotion to the source (encapsulated by (a) and (b)) manifests itself in our model is in the assumption that they know the parameter $\beta$. This could be the result of users having interacted with the source for a long enough time, and hence ``knowing what to expect'' from it, and/or $\beta$ being one of the salient features of the source that make it attractive to the users.  This is also consistent with the literature on Bayesian Persuasion, which has been used before to model interactions between platforms and news consumers \cite{Kamenica2009BayesianP,egorov2020persuasion,Hebert}. Note, however, that users do not necessarily heed the source's signal but, rather, use their knowledge of $\beta$ to interpret the signals they receive.

%




\section{Applications to Prototypical Networks}

In this section, we apply the main result of Section {sec3} to prototypical network structures. 
To maintain analytical tractability and be able to interpret results qualitatively, we focus on simple network structures with some relevance to actual social media platforms.
In particular, we study linear networks, star-chain networks, regular trees, and stochastic block models for the following reasons.

We analyze a linear network as it presents an intuitive analytical representation of our model.
In many social media settings, news dissemination takes place with a tree-like hub-spoke structure, often embedded within a larger social interaction network \cite{himelboim2017classifying, ediger2010massive}.
A star-chain network is built on this structure and possesses cohesive blocking clusters, a crucial element for platform adoption as defined below. 
\begin{definition}
    A cluster, or a set of users, has cohesion $x$ if each member has at least fraction $x$ of their neighbors within the set. 
    The cohesion of a cluster is the maximum of such $x$.
\end{definition}
Without news consumption payoffs, platform adoption process would be governed by cluster cohesion.
As shown in \cite{Morris2000Contagion}, a ``blocking cluster,'' that has the highest cohesion in a graph would determine whether the adoption to the whole population occurs or not.
A regular tree network exhibits, in addition to cohesive blocking clusters, the exponentially spreading communication feature, making it a good benchmark to gauge the relative importance of both characteristics (especially when compared to the star chain).

Large literature (e.g., \cite{mcpherson2001birds}, \cite{currarini2009economic}) has demonstrated that homophily is encoded in common social networks.
Stochastic block model is a random network suited to express such networks with community structures.
Analysis of this parameterized random network is useful because players may not know the exact structure of a large network, and even if they do, computing exact solutions is prohibitively expensive.
Instead, an approximate solution based on a few primary factors may be more practical. 
Coordination games on stochastic block model have been explored in prior works (e.g., \cite{jackson2017}, \cite{parise2019}).


\subsection{Linear Network}


An infinite linear network has users $i=0,1,2,\ldots$ with edges between $i$ and $i+1$. 
The news source can send a signal to user 0. 
If the sender and user $0,\ldots,k$ are in the same platform, user $k$ receives a signal with probability $p^k$.
We will use index $k$ instead of $i$ to imply the user is $k$ edges away from the sender, although $k$ coincides with $i$ in the linear network.

If the sender opts for platform $\mathbf{A}$, all users remain in $\mathbf{A}$ since $V_{k\mathbf{A}}>V_{k\mathbf{B}}$.
The sender's utility function is 
\vspace{-0.2cm}
\begin{equation}\label{UA}
    U_\mathbf{A}(\beta) 
    = \sum_{k=0}^{\infty}(\mu +(1-\mu)\beta)p^k
    = \frac{\mu +(1-\mu)\beta}{1-p}.
\end{equation}

If $\mathbf{P}_S=\mathbf{B}$, some users may switch platforms.
Let user $K$ be the last user who switches to $\mathbf{B}$.
We have $V_{k\mathbf{A}}\le V_{k\mathbf{B}}$ for $k\le K$, and $V_{k\mathbf{A}}> V_{k\mathbf{B}}$ for $k> K$, where
\begin{equation}\label{VkAVkB}
\begin{array}{ll}
    &V_{k\mathbf{A}}=b_\mathbf{A} +(1-\mu)c, \\
    &V_{k\mathbf{B}}= b_\mathbf{B} + \mu p^k(1-c) + (1-\mu)(1-\beta p^{k})c \\
\end{array}
\end{equation}
for $k=K, K+1$.
The sender thus maximizes
\begin{equation}
\begin{array}{ll}
    U_\mathbf{B}(\beta) 
    &= \sum_{k=0}^{K(\beta)}(\mu +(1-\mu)\beta)p^k 
\end{array}
\end{equation}
with $K$ satisfying the condition
\begin{equation}\label{condK}
    V_{K+1,B} < b_\mathbf{A} +(1-\mu)c \le V_{K,B}.
\end{equation}
It is worth noting that if $b_\mathbf{A}\le b_\mathbf{B}$, then $K=\infty$.

Condition (\ref{condK}) can be rewritten equivalently in two ways,
\vspace{-0.2cm}
\begin{equation} \label{cond_f}
    F(K+1) < \beta \leq F(K),
\end{equation}
\begin{equation} \label{cond_g}
    G(\beta)-1 < K \leq G(\beta),
\end{equation}
\vspace{-0.2cm}
where
\vspace{-0.2cm}
\begin{equation}\label{FG}
\begin{array}{ll}
    F(K) 
    &= \frac{1}{(1-\mu)c} \bigg(\mu(1-c)-\frac{b_\mathbf{A}-b_\mathbf{B}}{p^K} \bigg), \\
    G(\beta) 
    &= \frac{1}{\log p} 
    \log \bigg(\frac{b_\mathbf{A}-b_\mathbf{B}}{\mu(1-c)-(1-\mu)\beta c} \bigg).
\end{array}
\end{equation}
Note that $F(K), G(\beta)$ are decreasing functions, and the equalities in the conditions hold at the equilibrium.
We thus have 
\begin{equation}\label{Ustar_B}
\begin{array}{ll}
    U_\mathbf{B}^* 
    &= \max_{K\in \mathbb N} 
          (\mu+(1-\mu)F(K))\frac{1-p^{K+1}}{1-p} \\
    &=\max_{\beta: G(\beta)\in \mathbb N} 
          (\mu+(1-\mu)\beta)\frac{1-p^{G(\beta)+1}}{1-p}.
\end{array}
\end{equation}

Using these $U_\mathbf{A}(\beta)$ and $U^*_\mathbf{B}$, the strictest effective regulation can be designed.
In particular, \eqref{strictest_regulation_formula} becomes
\begin{equation}
    \rho_{SE}=\frac{(1-p)U^*_\mathbf{B}-\mu}{1-\mu}.
\end{equation}

The application of Proposition \ref{strictest_regulation} to this situation is illustrated in Figure \ref{fig:utility_linear}.
For all subsequent numerical experiments, we use the following parameter values unless otherwise specified: $\mu=0.2, c=0.3, p=0.9, b_\mathbf{A}-b_\mathbf{B}=0.01$.
(The value of $\mu$ does not affect the claims of this article qualitatively, and we will study different values of the other parameters in the following sections.)
When the sender chooses platform $\mathbf{B}$, the utility attains maximum $U^*_\mathbf{B}$ at $\beta=\beta^*_\mathbf{B}$.
If the sender selects $\beta>\beta^*_\mathbf{B}$, the signal is biased towards the sender's preference, but few users move to  $\mathbf{B}$ due to the poor news quality.
If $\beta<\beta^*_\mathbf{B}$, many users switch to $\mathbf{B}$ to follow the sender, but the signal is not biased in favor of the sender.
On the other hand, when the sender chooses $\mathbf{A}$, all users remain in the platform.
The sender's utility, $U_\mathbf{A}$, increases for $\beta\le \min\{\rho_\mathbf{A}, \beta'\}$, and therefore achieves the maximum value at $\beta=\min\{\rho_\mathbf{A}, \beta'\}$.
Proposition \ref{strictest_regulation} indicates, with these parameters, the strictest effective regulation is $\rho_{SE}$ such that $U_\mathbf{A}(\rho_{SE})=U^*_\mathbf{B}$.
If platform $\mathbf{A}$ enforces a stricter regulation $\rho_\mathbf{A}<\rho_{SE}$, then the sender's utility becomes lower than in $\mathbf{B}$.
In other words, the sender would rather send more biased information to fewer people in the alternative platform than comply with the regulation to gain more potential receivers, and thus platform $\mathbf{A}$ would lose some users.
On the other hand, regulation $\rho_\mathbf{A}\in[\rho_{SE}, \beta']$ works effectively, making the sender less deceptive, and the platform does not lose any users.

\begin{figure}
\begin{center}
\includegraphics[width=7.0cm]{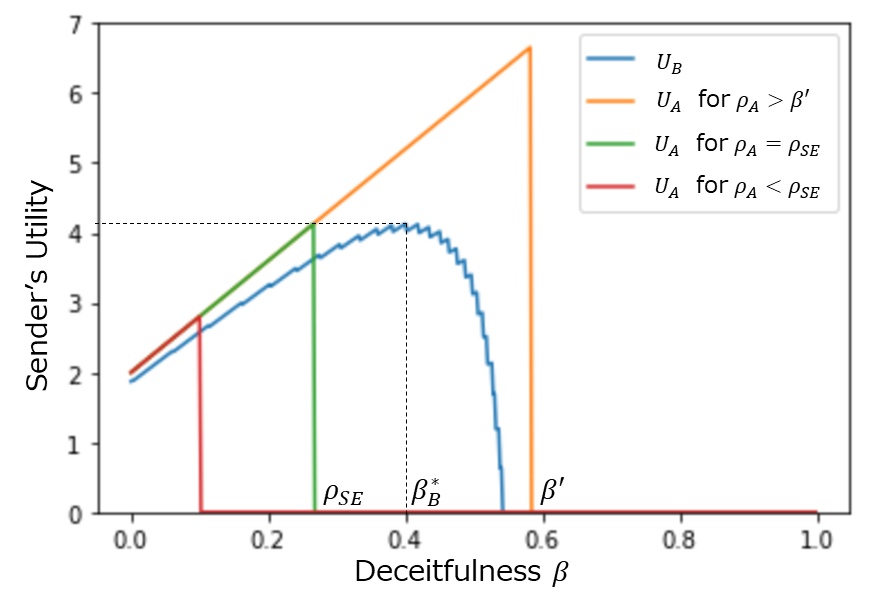}    
\vspace{-0.2cm}
\caption{The sender's expected utility in a linear network} 
\label{fig:utility_linear}
\end{center}
\end{figure}

We will now inspect for what $b_\mathbf{A}, b_\mathbf{B}, p$, platform $\mathbf{A}$ can enforce the strict regulation $\rho_\mathbf{A}=0$.
Our previous work \cite{sasaki2022necsys} proved a sufficient condition for $\rho_{SE}=0$:

\begin{prop}\label{prop_strict_regulation_linear}
    \textcolor{black}{\cite{sasaki2022necsys}} \;
    For an infinite linear network, 
    if $b_\mathbf{A}-b_\mathbf{B}>\mu(1-c)$ or $b_\mathbf{A}-b_\mathbf{B}>\mu(1-c)^2/p$, then the strictest effective regulation is $\rho_{SE}=0$. 
\end{prop}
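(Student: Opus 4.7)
The plan is to invoke Proposition \ref{strictest_regulation}: to establish $\rho_{SE}=0$ it is enough to verify $U_\mathbf{A}(0) > U^*_\mathbf{B}$. By (\ref{UA}), $U_\mathbf{A}(0) = \mu/(1-p)$. Setting $\Delta := b_\mathbf{A}-b_\mathbf{B}$ and $x := p^K$, a short algebraic simplification of $\mu + (1-\mu)F(K)$ using (\ref{FG}) yields $(\mu - \Delta/x)/c$, so from (\ref{Ustar_B}),
\[
    U^*_\mathbf{B} \;=\; \frac{1}{c(1-p)} \max_{K \in \mathcal K} f(x),
    \quad f(x) := \Bigl(\mu - \tfrac{\Delta}{x}\Bigr)(1 - px),
\]
where admissibility $\mathcal K := \{K \in \mathbb N : F(K) \in [0,\beta']\}$ reduces, after noting that $F(K) \le \beta'$ is automatic, to $x \ge x_0 := \Delta/(\mu(1-c))$. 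The target inequality thus becomes $\mu c > f(x)$ for every admissible $x = p^K \in [x_0, 1]$.

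Consider first the hypothesis $\Delta > \mu(1-c)$, which is exactly $x_0 > 1$. Then no admissible $K$ exists, hence $U_\mathbf{B}^* = 0 < U_\mathbf{A}(0)$. Intuitively, even at the ``honest'' extreme $\beta = 0$, no user benefits enough from news on $\mathbf{B}$ to offset the social interaction gap $\Delta$, so none migrate.

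Next consider the hypothesis $\Delta > \mu(1-c)^2/p$. If additionally $\Delta > \mu(1-c)$, the previous case applies; so assume $\mu(1-c)^2/p < \Delta \le \mu(1-c)$, which gives $x_0 \le 1$ and a nonempty admissible interval. I relax the discrete maximization to the continuous one on $[x_0, 1]$. Elementary calculus gives $f'(x) = \Delta/x^2 - \mu p$, so $f$ has a unique interior critical point $x^* = \sqrt{\Delta/(\mu p)}$ and is strictly decreasing on $(x^*, \infty)$. The equivalence
\[
    x^* < x_0 \;\Longleftrightarrow\; \Delta > \frac{\mu(1-c)^2}{p}
\]
then shows $f$ is monotonically decreasing on $[x_0, 1]$, so its continuous maximum is attained at the left endpoint, where a direct evaluation gives $f(x_0) = \mu c\bigl(1 - p\Delta/(\mu(1-c))\bigr) < \mu c$. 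Bounding the discrete maximum by the continuous one yields $U^*_\mathbf{B} < \mu/(1-p) = U_\mathbf{A}(0)$, as required.

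The main obstacle I anticipate is bookkeeping rather than analysis: identifying that admissibility correctly reduces to $x \in [x_0, 1]$ (with the lower bound $x_0$ encoding the sender's constraint $\beta \ge 0$), and justifying that replacing the discrete max over $\{p^K\}$ by the continuous max over $[x_0, 1]$ is an upper bound. Once this is settled, the two cases reduce to the short algebra/calculus above.
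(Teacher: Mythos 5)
Your argument is correct. Note first that the paper itself does not prove this proposition --- it is imported from \cite{sasaki2022necsys} --- so there is no in-paper proof to match step for step; the closest material is the derivation of \eqref{linear_thresh_rho0} and Proposition \ref{bA_threshold_for_strictest_restriction}. Your reduction is exactly that derivation: combining Proposition \ref{strictest_regulation} with \eqref{Ustar_B} and the identity $\mu+(1-\mu)F(K)=(\mu-\Delta/p^K)/c$ turns the target $U_\mathbf{A}(0)>U^*_\mathbf{B}$ into $\mu c> f(p^K)$ for all admissible $K$, which is precisely the condition $\Delta> f_K(p)$ of \eqref{linear_thresh_rho0} in disguise. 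What you add, and what the paper defers to the prior work, is the calculus showing that each of the two hypotheses forces this: the first hypothesis is exactly $x_0>1$, emptying the admissible set (user $0$ will not migrate even at $\beta=0$, so $U^*_\mathbf{B}=0$); the second places the critical point $x^*=\sqrt{\Delta/(\mu p)}$ of the continuous relaxation below $x_0$, so $f$ is decreasing on $[x_0,1]$ and the endpoint value $f(x_0)=\mu c\,(1-p\Delta/(\mu(1-c)))$ falls strictly below $\mu c$. The continuous relaxation is a legitimate upper bound on the discrete maximum, the admissibility bookkeeping ($F(K)\le\beta'$ automatic, $F(K)\ge0$ iff $p^K\ge x_0$) is right, and both cases give the strict inequality that Proposition \ref{strictest_regulation} requires. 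As a pleasant by-product, your computation makes explicit why Proposition \ref{bA_threshold_for_strictest_restriction} is the tighter statement: both hypotheses here imply $\Delta\ge\max_K f_K(p)$, but not conversely.
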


We will improve this proposition with a tighter bound.

Let $b_\mathbf{A}\ge b_\mathbf{B}\ge 0$.
Suppose that the sender chooses $\mathbf{B}$. 
At the equilibrium, the sender's best strategy satisfies $\beta^*_\mathbf{B} = F(K)$.
The strictest effective regulation $\rho_{SE}$ takes the value of zero if and only if $U_\mathbf{A}(\beta=0)\ge U_\mathbf{B}(\beta^*_\mathbf{B})$, i.e.,
\vspace{-0.2cm}
\begin{equation}
    \sum_{k=0}^{\infty}\mu p^k \ge \sum_{k=0}^{K} (\mu +(1-\mu)\beta^*_\mathbf{B})p^k,
\end{equation}
or 
\vspace{-0.2cm}
\begin{equation}\label{linear_thresh_rho0}
    b_\mathbf{A} - b_\mathbf{B} \ge \mu p^{K} - \frac{\mu c p^{K}}{1-p^{K+1}} =: f_{K}(p).
\end{equation}
Therefore, using $f_{K}(p)$ defined here, we have the following.

\begin{prop}\label{bA_threshold_for_strictest_restriction}
For an infinite linear network, the strictest effective regulation is $\rho_{SE}=0$ if $b_\mathbf{A}-b_\mathbf{B} \ge \max_K f_{K}(p)$.
\end{prop}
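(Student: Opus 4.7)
The plan is to directly unpack the characterization of $\rho_{SE} = 0$ that was set up in the paragraph immediately preceding the proposition. By Proposition \ref{strictest_regulation}, $\rho_{SE} = 0$ is equivalent to the single scalar inequality $U_\mathbf{A}(0) \ge U^*_\mathbf{B}$, so the task reduces to rewriting this inequality purely in terms of $b_\mathbf{A} - b_\mathbf{B}$ and $p$ and then extracting a condition that is uniform in $K$.

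First, I would use (\ref{UA}) to write $U_\mathbf{A}(0) = \mu/(1-p)$, and use (\ref{Ustar_B}) to express $U^*_\mathbf{B}$ as the maximum over $K \in \mathbb{N}$ of $U_\mathbf{B}(K) := (\mu + (1-\mu)F(K))(1-p^{K+1})/(1-p)$. Because $U^*_\mathbf{B}$ is this maximum, the inequality $U_\mathbf{A}(0) \ge U^*_\mathbf{B}$ is equivalent to the family of per-$K$ inequalities $U_\mathbf{A}(0) \ge U_\mathbf{B}(K)$ holding simultaneously for every nonnegative integer $K$.

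Next, I would perform, for each fixed $K$, the algebraic rearrangement already sketched in the derivation of (\ref{linear_thresh_rho0}): substituting the closed form of $F(K)$ from (\ref{FG}) and isolating $b_\mathbf{A} - b_\mathbf{B}$ on one side converts the per-$K$ inequality into $b_\mathbf{A} - b_\mathbf{B} \ge f_K(p)$, where $f_K(p) = \mu p^K - \mu c p^K/(1-p^{K+1})$. Taking the supremum over $K$ of the right-hand side then yields the sufficient condition $b_\mathbf{A} - b_\mathbf{B} \ge \max_K f_K(p)$, as claimed.

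The main (and very mild) obstacle is just to confirm that (\ref{linear_thresh_rho0}) genuinely holds for every integer $K$ rather than only for the equilibrium one, and that $\max_K f_K(p)$ is well-defined. The latter follows because, for $p \in (0,1)$, both $\mu p^K$ and $\mu c p^K/(1-p^{K+1})$ decay geometrically as $K \to \infty$, so $f_K(p) \to 0$ and the supremum over $K \ge 0$ is attained at some finite index; the former is simply the observation that the derivation of $f_K(p)$ in (\ref{linear_thresh_rho0}) never used any equilibrium property of $K$ beyond being a nonnegative integer.
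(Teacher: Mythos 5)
Your proposal is correct and follows essentially the same route as the paper: the paper likewise reduces $\rho_{SE}=0$ to $U_\mathbf{A}(0)\ge U_\mathbf{B}(\beta^*_\mathbf{B})$ with $\beta^*_\mathbf{B}=F(K)$, rearranges to $b_\mathbf{A}-b_\mathbf{B}\ge f_K(p)$, and takes the maximum over $K$ to obtain a condition independent of which $K$ occurs at equilibrium. Your explicit remarks that the per-$K$ rearrangement is valid for every nonnegative integer $K$ (not just the equilibrium one) and that the supremum is attained are small but welcome additions the paper leaves implicit.
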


In Figure \ref{fig:regulation}(a), the strictest effective regulation is calculated for different $p$ and $b_\mathbf{A}$, fixing $b_\mathbf{B}=0$.
The white curve presents $b_\mathbf{A} = \max_K f_{K}(p) + b_\mathbf{B}$ as in Proposition \ref{bA_threshold_for_strictest_restriction}. 
Above this curve, platformer $\mathbf{A}$ can enforce any regulation without losing users.

\begin{figure*}
\begin{center}
\includegraphics[width=0.86\textwidth]{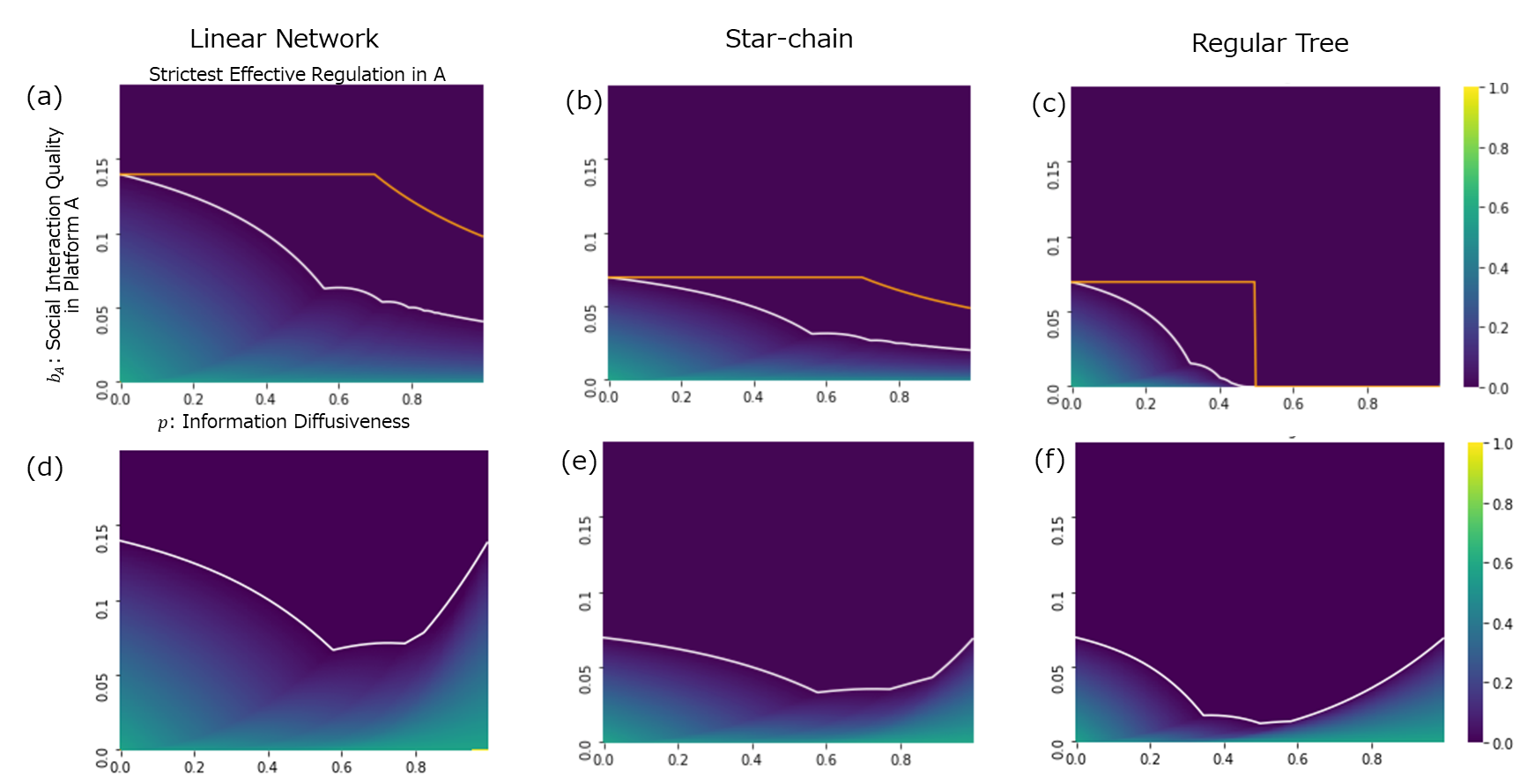}
\vspace{-.4cm}
\caption{
The strictest effective regulation $\rho_{SE}$ in (a) linear network, (b) star-chain network, and (c) regular-tree network. Respectively, (d), (e), (f) are the finite cases.
In panel (a), the white curve is computed using the formula of Proposition \ref{bA_threshold_for_strictest_restriction}.
From this proposition, the strictest effective regulation is guaranteed to be $\rho_{SE}=0$ above this curve in the $(p,b_\mathbf{A})$-plane. 
The orange curve plays the same role, but based on the results of Proposition \ref{prop_strict_regulation_linear}.
The color map is computed by \eqref{Ustar_B} and Proposition \ref{strictest_regulation}.
The white/orange curves and color maps in other panels are computed similarly.
}
\label{fig:regulation}
\end{center}
\end{figure*}


We now consider a finite linear network with users $0,1,\ldots,n-1$.
Note that $K$ does not take the value of $n-2$ since user $n-1$ always chooses the same platform as user $n-2$.
Following the argument in Appendix, we let
\vspace{-0.1cm}
\begin{equation*}
\begin{array}{ll}
     f_{n, K}(p):=\mu p^{K} - (1-p^n) \frac{\mu c p^{K}}{1-p^{K+1}}&  \text{for } K<n-2\\
     f_{n, n-2}(p):=\mu (1-c) p^{n-2}&  \text{for } K=n-1.
\end{array}
\end{equation*}
Then Proposition \ref{bA_threshold_for_strictest_restriction} becomes
\begin{prop}\label{bA_threshold_for_strictest_restriction_finite}
For a finite linear network, the strictest effective regulation is $\rho_{SE}=0$ if $b_\mathbf{A}-b_\mathbf{B} \ge \max_{K=0,1,\ldots,n-2} f_{n,K}(p)$.
\end{prop}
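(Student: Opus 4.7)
The plan is to mirror the proof of Proposition \ref{bA_threshold_for_strictest_restriction}, replacing the infinite geometric sums by their finite truncations and carefully enumerating which values of $K$ (the index of the last user to switch to $\mathbf{B}$) can arise at equilibrium. For the finite linear network, $U_\mathbf{A}(0)=\mu(1-p^n)/(1-p)$ rather than $\mu/(1-p)$. In addition, as noted just above the proposition, $K=n-2$ cannot be realized: since user $n-1$ is a leaf and loses its only platform-$\mathbf{A}$ anchor once $n-2$ switches, $n-1$ automatically follows. Hence the admissible set of $K$ becomes $\{0,1,\ldots,n-3\}\cup\{n-1\}$, and the criterion $\rho_{SE}=0 \Leftrightarrow U_\mathbf{A}(0)\geq U^*_\mathbf{B}$ (Proposition \ref{strictest_regulation}) reduces to checking $U_\mathbf{A}(0)\geq U_\mathbf{B}^{(K)}$ for each admissible $K$, where $U_\mathbf{B}^{(K)}$ denotes the sender's maximum utility on $\mathbf{B}$ conditional on the equilibrium having $K+1$ users on $\mathbf{B}$.

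For interior $K\in\{0,1,\ldots,n-3\}$, the boundary users $K$ and $K+1$ each still have one neighbor on each platform during the adoption dynamics, so the derivation of $\beta^*_\mathbf{B}=F(K)$ carries over unchanged from the infinite case, giving $U_\mathbf{B}^{(K)}=(\mu+(1-\mu)F(K))\frac{1-p^{K+1}}{1-p}$. Using the algebraic identity $\mu+(1-\mu)F(K)=\frac{1}{c}(\mu-(b_\mathbf{A}-b_\mathbf{B})/p^K)$, the condition $U_\mathbf{A}(0)\geq U_\mathbf{B}^{(K)}$ rearranges to $b_\mathbf{A}-b_\mathbf{B}\geq \mu p^K-(1-p^n)\frac{\mu c p^K}{1-p^{K+1}}=f_{n,K}(p)$, exactly as anticipated.

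For $K=n-1$ (all users switch), the binding constraint does not come from $n-1$'s leaf condition but from user $n-2$: during the sequential adoption, $n-2$ reconsiders while $n-1$ is still in $\mathbf{A}$, so the standard interior threshold $\beta\leq F(n-2)$ applies, and once $n-2$ flips, $n-1$ follows for any $\beta\leq\beta'$. Thus $\beta^*_\mathbf{B}=F(n-2)$ and $U_\mathbf{B}^{(K=n-1)}=(\mu+(1-\mu)F(n-2))\frac{1-p^n}{1-p}$. The inequality $U_\mathbf{A}(0)\geq U_\mathbf{B}^{(K=n-1)}$ then collapses, after cancelling the common factor $(1-p^n)/(1-p)$, to $F(n-2)\leq 0$, i.e., $b_\mathbf{A}-b_\mathbf{B}\geq \mu(1-c)p^{n-2}=f_{n,n-2}(p)$. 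Taking the maximum of the $f_{n,K}(p)$ over $K\in\{0,1,\ldots,n-3\}\cup\{n-1\}$ (which the statement collects into $K=0,1,\ldots,n-2$ via its indexing convention) yields the claim.

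The main obstacle is properly accounting for the leaf-node asymmetry in the $K=n-1$ case: one must recognize that $n-1$'s own switching condition is strictly slack whenever $\beta\leq\beta'$ (since it has no competing $b_\mathbf{A}$ anchor), so the effective bound on $\beta$ still comes from the interior condition at $n-2$ --- this is precisely the reason $f_{n,n-2}(p)$ takes the simpler form $\mu(1-c)p^{n-2}$ rather than following the $f_{n,K}(p)$ template used for interior $K$.
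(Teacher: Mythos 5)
Your proposal is correct and follows essentially the same route as the paper's appendix proof: truncate the geometric sums, note that $K=n-2$ is inadmissible so the cases are $K\in\{0,\ldots,n-3\}$ with $\beta^*_\mathbf{B}=F(K)$ and $K=n-1$ with $\beta^*_\mathbf{B}=F(n-2)$, and rearrange $U_\mathbf{A}(0)\ge U_\mathbf{B}(\beta^*_\mathbf{B})$ in each case to obtain $f_{n,K}(p)$ and $f_{n,n-2}(p)$ respectively. The algebra and the leaf-node reasoning for user $n-1$ both match the paper.
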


Panel (d) in Figure \ref{fig:regulation} shows the strictest effective regulation for a finite linear network.
As a rule, and as is evident from the figure, in the case of linear networks, platform $\mathbf{A}$ can impose any regulation on the sender as long as its quality of social interaction is sufficiently larger than that of platform $\mathbf{B}$.

\subsection{Star-chain Network}

In this section, we consider a star-chain network, 
where hub user $k$ has $r-1$ leaf nodes and two hub users $k-1, k+1$ as their neighbors. 
The news source sends the signal to hub user 0.
The cohesion of blocking clusters is $\frac{r}{r+1}$.
If hub users $0,1,\ldots,k$ are in the same platform as the sender, hub user $k$ receives the signal with probability $p^k$ (and their leaf nodes receive with probability $p^{k+1}$).

The news source's utility in platform $\mathbf{A}$ is 
\vspace{-0.2cm}
\begin{equation}
    U_\mathbf{A}(\beta) 
    = \sum_{k=0}^{\infty}(\mu +(1-\mu)\beta)(p^k + (r-1)p^{k+1}).
\end{equation}

Suppose the sender chooses platform $\mathbf{B}$.
Let $K$ be defined as an integer such that at the equilibrium, hub user $k\le K$ chooses platform $\mathbf{B}$ and hub user $k>K$ chooses $\mathbf{A}$.
Note that a peripheral user must always follow their hub user's choice of platform.
The sender in $\mathbf{B}$ maximizes their expected utility
\begin{equation}
\begin{array}{ll}
    U_\mathbf{B}(\beta) 
    = \sum_{k=0}^{K(\beta)}(\mu +(1-\mu)\beta)(p^k + (r-1)p^{k+1})
\end{array}
\end{equation}
subject to \footnote{For hub user $k=K, K+1$, consider
\begin{equation*}
\begin{array}{ll}
    &V_{k\mathbf{A}}= rb_\mathbf{A} +(1-\mu)c, \\
    &V_{k\mathbf{B}}= b_\mathbf{B} + \mu p^k(1-c) + (1-\mu)(1-\beta p^{k})c.
\end{array}
\end{equation*}}
\vspace{-0.2cm}
\begin{equation}\label{condK_starchain}
    V_{K+1,B} < rb_\mathbf{A} +(1-\mu)c \le V_{K,B}.
\end{equation}

Unlike linear network cases, even if the social interaction quality in platform $\mathbf{A}$ is lower than $\mathbf{B}$, users may prefer the former as it enables them to communicate with many friends.
Indeed, Proposition \ref{bA_threshold_for_strictest_restriction} becomes
\begin{prop}\label{bA_threshold_for_strictest_restriction_starchain}
For an infinite star chain, the strictest effective regulation is $\rho_{SE}=0$ if $rb_\mathbf{A}-b_\mathbf{B} \ge \max_K f_K(p)$.
\end{prop}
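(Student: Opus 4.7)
The plan is to mirror the derivation of Proposition \ref{bA_threshold_for_strictest_restriction} step by step, exploiting the fact that the star-chain utilities differ from the linear ones only by a common multiplicative factor and by the replacement $b_\mathbf{A} \to rb_\mathbf{A}$ at the cluster boundary. First, I would assume the sender chooses $\mathbf{B}$ and let $K$ denote the equilibrium boundary hub. From the equality case of \eqref{condK_starchain}, namely $V_{K,\mathbf{B}} = rb_\mathbf{A} + (1-\mu)c$, solving for $\beta$ gives a star-chain analogue
\begin{equation*}
    F^{sc}(K) = \frac{1}{(1-\mu)c}\left(\mu(1-c) - \frac{rb_\mathbf{A}-b_\mathbf{B}}{p^{K}}\right),
\end{equation*}
identical in form to the linear $F(K)$ of \eqref{FG} except that $b_\mathbf{A}$ is replaced by $rb_\mathbf{A}$. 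This single substitution captures the only structural difference between the two networks at the cluster boundary: a hub joining the $\mathbf{A}$-cluster recruits its $r-1$ leaves along with its hub neighbor, so the ``hop'' in social-interaction utility is $rb_\mathbf{A}$ rather than $b_\mathbf{A}$.

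Next, I would invoke Proposition \ref{strictest_regulation}: $\rho_{SE}=0$ is equivalent to $U_\mathbf{A}(0) \ge U_\mathbf{B}^*$. Writing the star-chain utilities explicitly,
\begin{equation*}
    U_\mathbf{A}(\beta) = (\mu+(1-\mu)\beta)\,\frac{1+(r-1)p}{1-p},
\end{equation*}
\begin{equation*}
    U_\mathbf{B}(\beta) = (\mu+(1-\mu)\beta)\,(1+(r-1)p)\,\frac{1-p^{K+1}}{1-p},
\end{equation*}
the common factor $(1+(r-1)p)/(1-p)$ cancels from both sides of the inequality $U_\mathbf{A}(0) \ge U_\mathbf{B}(F^{sc}(K))$. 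What remains coincides exactly with the scalar inequality appearing in the proof of Proposition \ref{bA_threshold_for_strictest_restriction}, with $b_\mathbf{A}-b_\mathbf{B}$ replaced throughout by $rb_\mathbf{A}-b_\mathbf{B}$; reusing that algebra verbatim yields the per-$K$ sufficient condition $rb_\mathbf{A}-b_\mathbf{B} \ge f_K(p)$.

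Finally, since $U_\mathbf{B}^{*} = \max_{K} U_\mathbf{B}(F^{sc}(K))$ (each admissible $K$ indexes a piece of the piecewise-affine envelope of $U_\mathbf{B}$), the condition $U_\mathbf{A}(0) \ge U_\mathbf{B}^*$ must hold for the worst-case $K$, giving the stated uniform bound $rb_\mathbf{A}-b_\mathbf{B} \ge \max_{K} f_K(p)$. The main obstacle is bookkeeping rather than computation: one must confirm that in a star-chain equilibrium only the hub users drive the boundary condition (with each hub's leaves automatically following their hub's platform choice, since a leaf has a single neighbor), so that the one-dimensional analysis in $K$ transfers faithfully from the linear chain. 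Once this reduction is in place, no new algebraic work is required.
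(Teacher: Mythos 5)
Your proposal is correct and follows exactly the route the paper intends: the paper's own justification is the single sentence ``The proof follows the same argument as the linear network,'' and your write-up simply fleshes out that argument — the boundary condition \eqref{condK_starchain} replaces $b_\mathbf{A}$ by $rb_\mathbf{A}$, the common factor $(1+(r-1)p)/(1-p)$ cancels from $U_\mathbf{A}(0)\ge U_\mathbf{B}^*$, and the leaf-follows-hub observation (stated in the paper) reduces everything to the one-dimensional analysis in $K$.
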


The proof follows the same argument as the linear network.
The proposition indicates that platform $\mathbf{A}$ needs relatively low quality of social interactions to enforce strict regulations when hub users have many peripheral users.
This is because users with many friends have high inertia to switch platforms.
Since users have more friends in platform $\mathbf{A}$ and higher incentive to stay there, the dominant platform has more power to affect players' behaviors.
The higher the cohesion of blocking clusters is, the easier it becomes to enforce strict regulations.

Panel (b) in Figure \ref{fig:regulation} presents numerical results for $r=2, b_\mathbf{B}=0$.
Compared to panel (a), the white curve is lower in the $(p,b_\mathbf{A})$-plane, and the dark area is larger in this panel.
This means that platform $\mathbf{A}$ is more capable of enforcing the strict regulation $\rho_\mathbf{A}=0$ than in the linear network case.

For a finite star-chain with $n$ hub users, the counterpart of Proposition \ref{bA_threshold_for_strictest_restriction_finite} also holds with some modification.\footnote{The farthest hub user can stay at platform $\mathbf{A}$ even when the other hub users are in $\mathbf{B}$, while in a finite linear network, the farthest user always chooses the same platform as the second farthest user.}
Letting 
\vspace{-0.2cm}
\begin{equation}
    g_{n, K}(p):= \mu p^{K} - (1-p^n) \frac{\mu c p^{K}}{1-p^{K+1}}
\end{equation}
for $K\le n-1$, Proposition \ref{bA_threshold_for_strictest_restriction_finite} becomes
\begin{prop}
For a finite star chain, the strictest effective regulation is $\rho_{SE}=0$ if $r b_\mathbf{A}-b_\mathbf{B} \ge \max_{K=0,1,\ldots,n-1} g_{n,K}(p)$.
\end{prop}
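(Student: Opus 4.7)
The plan is to mirror the proof of Proposition \ref{bA_threshold_for_strictest_restriction_finite} (the finite linear network) and adapt it to the star-chain geometry, exploiting the fact that the extra leaf nodes contribute a \emph{common} multiplicative factor on both sides of the key inequality. As in the earlier arguments, $\rho_{SE}=0$ holds if and only if the sender weakly prefers staying in $\mathbf{A}$ with $\beta=0$ to moving to $\mathbf{B}$ and playing optimally, i.e., $U_\mathbf{A}(0)\ge U_\mathbf{B}^*$. So I would first write down both quantities explicitly for the finite star chain, then isolate $rb_\mathbf{A}-b_\mathbf{B}$.

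First I would parameterize the migration equilibrium by the index $K\in\{0,1,\ldots,n-1\}$ of the farthest hub that switches, analogously to the infinite case. Using (\ref{condK_starchain}) with $V_{K,\mathbf{B}}=rb_\mathbf{A}+(1-\mu)c$ (equality at equilibrium, because $F(K)$ is a decreasing function of $K$) and solving for $\beta$, I obtain the star-chain analog of $F(K)$, namely
\begin{equation*}
\beta_\mathbf{B}^*(K)=\frac{1}{(1-\mu)c}\left(\mu(1-c)-\frac{rb_\mathbf{A}-b_\mathbf{B}}{p^{K}}\right),
\end{equation*}
which is exactly $F(K)$ with $b_\mathbf{A}$ replaced by $rb_\mathbf{A}$, reflecting that each hub has $r$ same-platform neighbors (one hub plus $r-1$ leaves) rather than just one.

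Next I would substitute into $U_\mathbf{A}(0)\ge U_\mathbf{B}(\beta_\mathbf{B}^*(K))$. Both sides contain a geometric sum of $p^k+(r-1)p^{k+1}=(1+(r-1)p)p^k$, so the factor $1+(r-1)p$ cancels and the inequality reduces to
\begin{equation*}
\mu(1-p^n)\;\ge\;(\mu+(1-\mu)\beta_\mathbf{B}^*(K))(1-p^{K+1}).
\end{equation*}
Substituting the expression for $\beta_\mathbf{B}^*(K)$ and rearranging isolates $rb_\mathbf{A}-b_\mathbf{B}$ and yields the threshold $g_{n,K}(p)=\mu p^K-(1-p^n)\frac{\mu c p^K}{1-p^{K+1}}$, via the same algebraic manipulation that turns $f_K(p)$ into the threshold in Proposition \ref{bA_threshold_for_strictest_restriction}. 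Taking the maximum over the admissible $K$ delivers the stated bound.

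The main subtlety (flagged in the footnote) is that the range of $K$ differs from the finite linear case. In the linear network, user $n-1$ has only a single neighbor and must therefore copy user $n-2$'s choice, so $K=n-2$ is impossible and the boundary case $K=n-1$ requires a separate formula $f_{n,n-2}$. In the finite star chain, by contrast, the farthest hub retains $r-1$ leaves as neighbors independently of what hub $n-2$ does, so it can remain in $\mathbf{A}$ even when hub $n-2$ migrates; consequently every $K\in\{0,1,\ldots,n-1\}$ is realizable and the same closed form $g_{n,K}(p)$ applies uniformly at the endpoint $K=n-1$ (where the upper constraint in (\ref{condK_starchain}) is vacuous but the lower one $V_{K,\mathbf{B}}\ge rb_\mathbf{A}+(1-\mu)c$ still pins $\beta_\mathbf{B}^*$). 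Verifying this uniformity—and confirming that the cancellation of $1+(r-1)p$ remains valid at the endpoint—will be the only non-routine step.
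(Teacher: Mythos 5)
Your proposal is correct and follows essentially the same route as the paper, which itself only states that the finite-linear-network derivation in the Appendix carries over "with some modification": parameterize by the marginal hub $K$, extract $\beta^*_\mathbf{B}(K)$ from the indifference condition with $b_\mathbf{A}$ replaced by $rb_\mathbf{A}$, cancel the common factor $1+(r-1)p$ in $U_\mathbf{A}(0)\ge U_\mathbf{B}^*$, and rearrange to get $g_{n,K}$; your algebra reproduces the paper's formula exactly, and you correctly identify the footnoted point that $K=n-2$ becomes realizable so no separate endpoint formula is needed. The one wrinkle you (like the paper) gloss over is that the farthest hub $n-1$ has only $r-1$ same-platform neighbors in $\mathbf{A}$ (its leaves, with no hub $n$), so strictly its switching threshold involves $(r-1)b_\mathbf{A}$ rather than $rb_\mathbf{A}$ at $K=n-1$ — but since the paper applies $g_{n,K}$ uniformly there too, your treatment matches the intended argument.
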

Panel (e) in the figure presents the result for $n=5$.

\subsection{Regular Tree Network}

In an $r$-regular tree network, each user is connected to $r$ child nodes. 
The news source sends the signal to the root node (user 0).
The blocking cluster cohesion is $\frac{r}{r+1}$ like for the star-chain network, but as the distance from the sender increases, the number of users grows exponentially. 
If users $0,1,\ldots,k$ along a path are in the same platform as the sender, then the probability that user $k$ receives the signal is $p_{k\mathbf{P}_k}=p^k$.

In platform $\mathbf{A}$, the sender's expected utility is 
\vspace{-0.2cm}
\begin{equation}
\begin{array}{ll}
    U_\mathbf{A}(\beta) 
    &= \sum_{k=0}^{\infty}(\mu +(1-\mu)\beta)p^k r^k.
\end{array}
\end{equation}

The sender in platform $\mathbf{B}$ maximizes the utility
\vspace{-0.2cm}
\begin{equation}
\begin{array}{ll}
    U_\mathbf{B}(\beta) 
    &= \sum_{k=0}^{K(\beta)}(\mu +(1-\mu)\beta)p^k r^k
\end{array}
\end{equation}
subject to condition (\ref{condK_starchain}). 
Proposition \ref{bA_threshold_for_strictest_restriction_starchain} holds when replacing function $f_K$ by $h_K$,
\vspace{-0.2cm}
\begin{equation}
    h_K(p) := \mu p^{K} - \frac{\mu c p^{K}}{1-(pr)^{K+1}}.
\end{equation}
In addition, we can show the following.
\begin{prop}\label{tree_infinite}
For an infinite regular tree, if
$rb_\mathbf{A}>b_\mathbf{B}$ and $pr>1$, then $\rho_{SE}=0$.
\end{prop}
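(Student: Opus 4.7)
My plan is to apply Proposition \ref{strictest_regulation} in its strongest form, showing that $U_\mathbf{A}(0)=+\infty$ while $U_\mathbf{B}^*$ remains finite; under that dichotomy the proposition immediately forces $\rho_{SE}=0$. The two hypotheses play complementary roles: $pr>1$ makes the sender's A-utility explode when everyone stays in $\mathbf{A}$, while $rb_\mathbf{A}>b_\mathbf{B}$ caps the depth of the defecting subtree when the sender moves to $\mathbf{B}$.

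When the sender picks $\mathbf{A}$, the all-$\mathbf{A}$ profile is the equilibrium (as already noted inside the proof of Proposition \ref{converge_acyclic}), so $p_{i\mathbf{A}}=p^{d(i)}$, where $d(i)$ denotes the depth of user $i$ from the root. Summing level by level gives
\begin{equation*}
\sum_i p_{i\mathbf{A}}=\sum_{k=0}^{\infty}r^k p^k=\sum_{k=0}^{\infty}(pr)^k=+\infty,
\end{equation*}
under $pr>1$, and hence $U_\mathbf{A}(0)=\mu\sum_i p_{i\mathbf{A}}=+\infty$, matching the case $\sum_i p_{i\mathbf{A}}=\infty$ flagged before Proposition \ref{strictest_regulation}.

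When the sender instead picks $\mathbf{B}$ and fixes some $\beta\in[0,\beta']$, the symmetry of the regular tree and of the all-$\mathbf{A}$ initial state ensures that the equilibrium set of B-users is a full subtree of some depth $K=K(\beta)$. The argument is structurally identical to the one behind (\ref{condK_starchain}): a boundary user at depth $K$ or $K+1$ has $r$ children as $\mathbf{A}$-neighbors and one parent as $\mathbf{B}$-neighbor, giving $V_{k\mathbf{A}}=rb_\mathbf{A}+(1-\mu)c$ and $V_{k\mathbf{B}}=b_\mathbf{B}+\mu p^k(1-c)+(1-\mu)(1-\beta p^k)c$. The equilibrium condition $V_{K\mathbf{A}}\le V_{K\mathbf{B}}$ rearranges to
\begin{equation*}
p^K\bigl[\mu(1-c)-(1-\mu)\beta c\bigr]\ge rb_\mathbf{A}-b_\mathbf{B},
\end{equation*}
and dropping the nonnegative term $(1-\mu)\beta c$ yields the $\beta$-free bound $p^K\ge(rb_\mathbf{A}-b_\mathbf{B})/[\mu(1-c)]>0$. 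Since $p<1$, this pins $K(\beta)\le K_{\max}$ for some $K_{\max}$ independent of $\beta$, so
\begin{equation*}
U_\mathbf{B}(\beta)=(\mu+(1-\mu)\beta)\sum_{k=0}^{K(\beta)}(pr)^k\le(\mu+(1-\mu)\beta')\sum_{k=0}^{K_{\max}}(pr)^k<+\infty,
\end{equation*}
giving $U_\mathbf{B}^*<+\infty$. Combining the two halves with Proposition \ref{strictest_regulation} delivers $\rho_{SE}=0$.

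The main conceptual obstacle is that $U_\mathbf{B}(\beta)$ carries exactly the same geometric factor $(pr)^k$ that makes $U_\mathbf{A}(0)$ diverge, so one might fear $U_\mathbf{B}^*$ also blows up and the comparison degenerates to $\infty$ versus $\infty$. What rescues us is the monotonicity observation underlying the $\beta$-free bound: raising $\beta$ degrades the B-side news payoff and therefore only shrinks the defecting subtree, so taking $\beta=0$ as the worst case is what produces the uniform cap $K_{\max}$.
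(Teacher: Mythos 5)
Your proposal is correct and follows essentially the same route as the paper's proof: both establish that $pr>1$ makes $U_\mathbf{A}(0)=\infty$ while $rb_\mathbf{A}>b_\mathbf{B}$ confines the defecting set to a finite subtree, and then invoke Proposition \ref{strictest_regulation}. The only difference is that the paper simply asserts that finitely many users move to $\mathbf{B}$, whereas you derive the uniform depth bound $K_{\max}$ explicitly from the boundary indifference condition — a welcome extra level of rigor, not a different argument.
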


\begin{proof}
    When $pr>1$, the expected number of users who receive the sender's signal is infinite.
    On the other hand, with $rb_\mathbf{A}>b_\mathbf{B}$, at most finite users move to platform $\mathbf{B}$.
    Therefore, the sender's utility is infinity in platform $\mathbf{A}$ and finite in $\mathbf{B}$. 
    By Proposition \ref{strictest_regulation}, since $U_\mathbf{A}(0)>U^*_\mathbf{B}$, 
    the sender should always choose platform $\mathbf{A}$.
    Any strict regulation thus works effectively, i.e., $\rho_{SE}=0$. 
\end{proof}

When a signal is diffusive, even lower quality of social interaction is sufficient for strict regulation because of exponentially many potential receivers.
Distant users are relatively important to the sender, and the dominant platform becomes powerful.
As long as the social interaction quality satisfies $rb_\mathbf{A}>b_\mathbf{B}$, platform $\mathbf{A}$ can enforce any regulation if the signal's reproduction rate $rp$ is greater than one.
Lower degree trees require higher $p$ for strict regulation.

Panel (c) in Figure \ref{fig:regulation} shows the strictest effective regulation for $r=2, b_\mathbf{B}=0$.
In the regular tree (c) and the star chain (b), platform $\mathbf{A}$ is more capable of enforcing strict regulation than in the linear network (a), due to cohesive blocking clusters.
In the regular tree (c), platform $\mathbf{A}$ is even more capable than in the star chain (b) due to the number of the potential receivers.
If $p>1/r$, a regular tree has $\rho_{SE}=0$ for any $b_\mathbf{A}\ge 0$.

The finite case follows the same argument as Proposition \ref{bA_threshold_for_strictest_restriction_finite}.
Let
\vspace{-0.2cm}
\begin{equation*}
\begin{array}{ll}
    h_{n, K}(p) := \mu p^{K} - (1-(pr)^n) \frac{\mu c p^{K}}{1-(pr)^{K+1}}& \text{for }K<n-2, \\
    h_{n, n-2}(p) := \mu (1-c) p^{n-2}& \text{for }K=n-1.
\end{array}    
\end{equation*}
\begin{prop}\label{bA_threshold_for_strictest_restriction_tree}
For a finite regular tree, the strictest effective regulation is $\rho_{SE}=0$ if $rb_\mathbf{A}-b_\mathbf{B} \ge \max_{K=0,1,\ldots,n-2} h_{n,K}(p)$.
\end{prop}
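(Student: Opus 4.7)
The plan is to transfer the argument that established Proposition \ref{bA_threshold_for_strictest_restriction_finite} (finite linear network) to the regular tree, while incorporating the branching factor $r$ in the same way that Proposition \ref{bA_threshold_for_strictest_restriction_starchain} adapted Proposition \ref{bA_threshold_for_strictest_restriction} for the infinite star-chain.

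First I would observe that if $\mathbf{P}_S=\mathbf{A}$ the claim is vacuous, so the interesting case is $\mathbf{P}_S=\mathbf{B}$. By acyclicity and the cascade analysis in the proof of Proposition \ref{converge_acyclic}, any equilibrium is indexed by the deepest layer $K$ that adopts $\mathbf{B}$, with layers $0,\dots,K$ in $\mathbf{B}$ and layers $K+1,\dots,n-1$ in $\mathbf{A}$. A user at layer $n-1$ is a leaf, and a direct check shows that once its parent is in $\mathbf{B}$ the net switching incentive at $\beta=\beta'$ reduces to $b_\mathbf{B}\ge 0$; hence leaves always mirror their parent, and the switching depth $K=n-2$ is never realized as an equilibrium---exactly the role played by user $n-1$ in the finite linear analysis.

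Next, for each admissible $K$ I would compute the largest $\beta$ compatible with that equilibrium and evaluate $U_\mathbf{B}$. For $K\le n-3$, the binding user is the non-leaf at layer $K+1$, whose $r$ children are still in $\mathbf{A}$ and whose parent is in $\mathbf{B}$, so condition (\ref{condK_starchain}) applies verbatim and $\beta^*_\mathbf{B}(K)$ equals the analog of $F(K)$ in (\ref{FG}) with $b_\mathbf{A}-b_\mathbf{B}$ replaced by $rb_\mathbf{A}-b_\mathbf{B}$. For $K=n-1$ the binding constraint shifts up to layer $n-2$ (whose $r$ leaf children are still in $\mathbf{A}$ at the moment of decision), yielding $\beta^*_\mathbf{B}(n-1)$ of the same form but evaluated at $n-2$. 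Substituting into $U_\mathbf{B}(\beta)=(\mu+(1-\mu)\beta)\sum_{k=0}^{K}(pr)^k$ and collapsing the geometric sum gives a closed form in each case.

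Finally, combining with $U_\mathbf{A}(0)=\mu\bigl(1-(pr)^n\bigr)/(1-pr)$ and invoking Proposition \ref{strictest_regulation}, the requirement $U_\mathbf{A}(0)\ge U_\mathbf{B}(\beta^*_\mathbf{B}(K))$ reduces---after cancelling the common factor $(1-pr)^{-1}$ and clearing denominators---to precisely $rb_\mathbf{A}-b_\mathbf{B}\ge h_{n,K}(p)$, recovering both pieces of the piecewise definition ($K\le n-3$ produces the generic formula, while $K=n-1$ maps to index $n-2$ and yields the anomalous value $\mu(1-c)p^{n-2}$). Enforcing this for every admissible $K$ gives the claim. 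The main subtlety is the boundary case $K=n-1$: one must correctly identify layer $n-2$ rather than the leaves as the binding user, which is what produces the non-generic second line in the definition of $h_{n,K}$; the rest is routine geometric-series manipulation already carried out for the star-chain extension.
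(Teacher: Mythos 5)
Your proposal is correct and takes essentially the same route as the paper, which simply ports the finite-linear-network computation from the Appendix (the comparison $U_\mathbf{A}(0)\ge U_\mathbf{B}(F(K))$ for each realizable adoption depth $K$, with the all-adopt case $K=n-1$ reindexed to $n-2$ and yielding $F(n-2)\le 0$, i.e., $\mu(1-c)p^{n-2}$) to the tree by replacing $b_\mathbf{A}-b_\mathbf{B}$ with $rb_\mathbf{A}-b_\mathbf{B}$ in condition (\ref{condK_starchain}) and $p^k$ with $(pr)^k$ in the utility sums. One terminological nitpick: the largest $\beta$ sustaining depth $K$ is pinned by the layer-$K$ user's adoption constraint (the right-hand inequality of (\ref{condK_starchain}), giving $F(K)$), while the layer-$(K+1)$ user's constraint only supplies the lower end of the window (and its non-leaf status is what rules out depth $n-2$); your computation uses the correct constraint even though you call the layer-$(K+1)$ user ``binding.''
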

Note that Proposition \ref{tree_infinite} does not hold since the signal does not spread to infinitely many users.
Panel (f) shows the result for a finite tree truncated at the fifth generation.

\subsection{Stochastic Block Model}

\begin{figure*}
\centering
\includegraphics[width=1\textwidth]{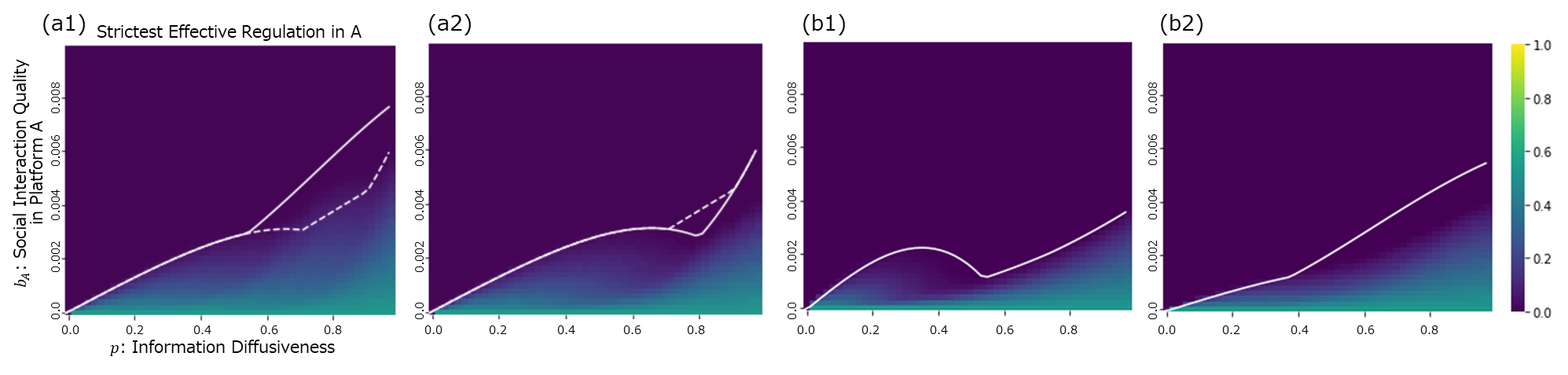}
\vspace{-.8cm}
\caption{
The strictest effective regulation $\rho_{SE}$ for (a1, a2) community chains and (b1, b2) complete graphs of communities.
The community chain has a middle community of (a1) low or (a2) high intra-community tightness.
Parameter values are as described in the main text. 
The color map is the simulation results of 50 random graphs.
The white curves calculated by Proposition \ref{thm_sbm_thresh} fit the results of the simulation, which does not use Assumptions \ref{assumption_community_migration}--\ref{assumption_external_friend} for the proposition.
The dashed curves represent the base case with medium intra-community tightness (see Figure \ref{fig:regulation_differentc_communitychain} panel (b)).
With high $p$, platform $\mathbf{A}$ can enforce strict regulation more easily (i.e., the white curve is lower) in (a2) than (a1).
For the complete graphs of communities, the sender is in a (b1) small or (b2) large community.
The white curves calculated by Proposition \ref{thm_sbm_thresh} again fit the color maps (the simulation results for 50 random graphs), which are computed without Assumptions \ref{assumption_community_migration}--\ref{assumption_external_friend} for the proposition.
With low $p$, platform $\mathbf{A}$ can enforce strict regulation more easily in (b2) than (b1).
}
\label{fig:sbm}

\end{figure*}

A stochastic block model is a random graph with $M$ communities.
Community $J\in \{1,\ldots, M\}$ has $n_J$ users, and a pair of users in communities $J, J'$ are friends with probability $\theta_{JJ'}\in [0,1]$.

To provide analytical expressions, we restrict our study to stochastic block models on which the platform reallocation process satisfies the few assumptions mentioned below. 

\begin{assumption}\label{assumption_community_migration}
    If a user moves to platform $\mathbf{B}$, then the other users in the same community do so too.
\end{assumption}
While this assumption may appear restrictive, we show through numerical simulations in Appendix that it is satisfied for some stochastic block models when $\theta_{JJ}$ is close to one.

\begin{assumption}\label{assumption_migration_order}
    Communities move to platform $\mathbf{B}$ in the order of community $1,2,\ldots,M$ (if they relocate).
\end{assumption}
We restrict our attention to network structures such that we can estimate the order of migration.
Such network class includes various community structures, for instance a chain, a star graph, and a complete graph of communities as we will see in the following subsections.

\begin{assumption}\label{assumption_external_friend}
    Friendship across communities is rare (i.e., $\theta_{JJ'}\approx 0, J\neq J'$), and the first-relocating user in a community has a single friend in the precedingly-relocated communities.
\end{assumption}
The argument can be naturally extended to more connected first-relocating users, provided that the number of such external friends, not necessarily one, is known to us.
Define user $\phi(J)$ as the first (potentially) relocating user in community $J$.

The following proposition is derived in the Appendix.
\begin{prop}\label{thm_sbm_thresh}
    Under Assumptions \ref{assumption_community_migration}--\ref{assumption_external_friend}, the strictest effective regulation is ${\rho_{SE}=0}$ if 
\vspace{-0.2cm}
\begin{equation*}
    n_J \theta_{JJ}b_\mathbf{A}-b_\mathbf{B} 
    \ge \mu (1-c)p_{\phi(J)\mathbf{B}} - \frac{\sum_{L=J+1}^M R_L}{\sum_{L=1}^J R_L} \mu c p_{\phi(J)\mathbf{B}}.
\end{equation*}
    for all $J=1,\ldots,M$, where $R_J$ denotes the expected number of signal receivers in community $J$ when the community is in the same platform as the sender.
\end{prop}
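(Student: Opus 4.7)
My plan is to invoke Proposition \ref{strictest_regulation}, which gives $\rho_{SE}=0$ exactly when $U_\mathbf{A}(0)\ge U^*_\mathbf{B}$, and then to compute both sides. The left-hand side is immediate: since all users stay on $\mathbf{A}$ when the sender does, $U_\mathbf{A}(0)=\mu\sum_{L=1}^M R_L$ by the very definition of $R_L$. The substance of the proof lies in characterizing $U^*_\mathbf{B}$.

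By Assumptions \ref{assumption_community_migration} and \ref{assumption_migration_order}, whenever the sender picks $\beta$ while on $\mathbf{B}$, the equilibrium set of users on $\mathbf{B}$ consists of the sender together with some prefix $\{1,\ldots,J(\beta)\}$ of communities. Since the news consumption payoff makes $\mathbf{B}$ strictly more attractive as $\beta$ decreases while the social interaction payoff is independent of $\beta$, $J(\beta)$ is a non-increasing step function with jumps at thresholds $\beta^*_1\ge\beta^*_2\ge\cdots\ge\beta^*_M$, where $\beta^*_J$ is the value at which user $\phi(J)$ becomes indifferent between platforms. On each interval where $J(\beta)$ is constant and equal to $J$, $U_\mathbf{B}(\beta)=(\mu+(1-\mu)\beta)\sum_{L=1}^J R_L$ is strictly increasing in $\beta$, so the supremum is attained at the right endpoint $\beta^*_J$, giving
\begin{equation*}
    U^*_\mathbf{B}=\max_{J=1,\ldots,M}\,\bigl(\mu+(1-\mu)\beta^*_J\bigr)\sum_{L=1}^J R_L.
\end{equation*}

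Each $\beta^*_J$ is then obtained from the indifference condition $V_{\phi(J)\mathbf{A}}=V_{\phi(J)\mathbf{B}}$ at the moment $\phi(J)$ flips. At that moment, Assumption \ref{assumption_community_migration} ensures that $\phi(J)$'s (expected) $n_J\theta_{JJ}$ intra-community friends are still on $\mathbf{A}$, while Assumption \ref{assumption_external_friend} restricts its $\mathbf{B}$-neighborhood to a single friend in a previously-relocated community, so that $\Phi_{\phi(J)\mathbf{B}}=b_\mathbf{B}$. Expanding both utilities using \eqref{uti} and solving for $\beta$ yields an explicit expression for $(1-\mu)\beta^*_J$ of the form $(1-\mu)\beta^*_J\,p_{\phi(J)\mathbf{B}}c=b_\mathbf{B}-n_J\theta_{JJ}b_\mathbf{A}+\mu p_{\phi(J)\mathbf{B}}(1-c)$. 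Substituting this expression into the requirement $U_\mathbf{A}(0)\ge(\mu+(1-\mu)\beta^*_J)\sum_{L=1}^J R_L$ for each $J$ and rearranging reproduces the stated inequality term-by-term.

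The main obstacle is justifying the marginal picture at the instant $\phi(J)$ flips, specifically that Assumption \ref{assumption_community_migration} legitimately lets us treat community $J$ as an all-or-nothing block (so $\phi(J)$'s $\mathbf{A}$-neighborhood is exactly the pre-migration one) and that Assumption \ref{assumption_external_friend} collapses its $\mathbf{B}$-neighborhood to a single friend. A secondary subtlety is confirming that, under Assumption \ref{assumption_migration_order}, the thresholds are ordered $\beta^*_1\ge\cdots\ge\beta^*_M$ so that the interval decomposition of $U_\mathbf{B}(\beta)$ used above is well-defined; once those reductions are accepted, the rest is routine algebraic manipulation of the indifference relation against $U_\mathbf{A}(0)\ge U_\mathbf{B}(\beta^*_J)$.
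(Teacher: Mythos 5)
Your proposal is correct and follows essentially the same route as the paper's own derivation: it computes the flip threshold $\beta_J$ for the first-relocating user $\phi(J)$ from the indifference condition $V_{\phi(J)\mathbf{A}}=V_{\phi(J)\mathbf{B}}$ (your expression for $(1-\mu)\beta^*_J p_{\phi(J)\mathbf{B}}c$ rearranges exactly to the paper's \eqref{beta_j}), expresses $U^*_\mathbf{B}$ as the maximum over the prefix-equilibria indexed by $J$, and compares each candidate against $U_\mathbf{A}(0)=\mu\sum_L R_L$. The ``obstacles'' you flag (the all-or-nothing community block, the single external friend, and the threshold ordering) are precisely what Assumptions \ref{assumption_community_migration}--\ref{assumption_external_friend} are invoked to supply, so the paper treats them the same way you do.
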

Proposition \ref{thm_sbm_thresh} emphasizes the role played by $p_{\phi(J)\mathbf{B}}$ and $R_J$ in enabling the platformer to apply regulation $\rho_{SE}=0$. These aggregate numbers can further be related to the basic structure and characteristics of the network if, e.g., they satisfy particular scaling laws. In that case, Proposition \ref{thm_sbm_thresh} can be used to gauge the effect of such parameters as well.

\subsubsection{Community-chain-type Scaling}

Our first example is $p_{\phi(J)\mathbf{B}}$ and $R_J$ decaying with $J$ according to
\vspace{-0.2cm}
\begin{equation}\label{communitychain_condition}
\begin{aligned}
    p_{\phi(J)\mathbf{B}} &= \begin{cases}
                  p  & J=1 \\
                  p^{2J-2} & 2\le J\le M,
              \end{cases} \\
    R_J &= n_J p^{2J-1}.
\end{aligned}
\end{equation}

As explained heuristically in the Appendix, such a scaling can be achieved in a so-called ``chain of communities'' network. 
Numerical simulations also presented in Appendix seem to indicate that Assumptions \ref{assumption_community_migration}--\ref{assumption_external_friend} hold for these networks when the intra-community tightness (i.e., probability $\theta_{JJ}$) is high, which makes them good representatives of the class of networks for which \eqref{communitychain_condition} holds and Proposition \ref{thm_sbm_thresh} is applicable.
Note that this is all that is needed of the network to perform our analysis. 


For example, consider a base model with $n_1=n_2=n_3=30$ and $\theta_{JJ'}$ given by
\begin{equation}\label{community_chain_theta}
    \theta = 
    \begin{pmatrix}
        \frac{3}{4}   & \frac{4}{900} & 0 \\
        \frac{4}{900} & \frac{3}{4}   & \frac{4}{900} \\
        0             & \frac{4}{900} & \frac{3}{4}   \\
    \end{pmatrix}.
\end{equation}

We utilize Proposition \ref{thm_sbm_thresh} to analyze how the regulation depends on the in-between community's ``tightness.''
For variants of the basic parameter sets, instead of $\theta_{22}=3/4$ for the second community, we consider $\theta_{22}=1/2$ (case (a1)) and $\theta_{22}=1$ (case (a2)).

Figure \ref{fig:sbm} shows the strictest regulation $\rho_{SE}$ for these networks.
The simulation does not use Assumptions \ref{assumption_community_migration}-\ref{assumption_external_friend} a priori, which Proposition \ref{thm_sbm_thresh} is based on.
The white curves indicate that our mathematical analysis of Proposition \ref{thm_sbm_thresh} fits the simulation results (the color map) of the corresponding physical model.
In particular, with high $p$, platform $\mathbf{A}$ can enforce strict regulation more easily for (a2), featuring a tightly connected intermediate community, than for (a1), which contains a loosely connected one.
This is in agreement with our qualitative understanding. 
When $p$ is high, the sender's influence on distant users becomes relatively important. 
Since a tightly connected intermediate community blocks their platform migration, it helps platform $\mathbf{A}$ hold more power to impose regulation.

\subsubsection{Community-star-type Scaling}

Another example is given by $p_{\phi(J)\mathbf{B}}$ and $R_J$ that, in contrast, do not decay with $J$:
\begin{equation}\label{communitystar}
\begin{aligned}
    p_{\phi(J)\mathbf{B}} &= \begin{cases}
                  p  & J=1 \\
                  p^2 & 2\le J\le M,
              \end{cases} \\
    R_J &= \begin{cases}
                  n_1 p  & J=1 \\
                  n_J p^3 & 2\le J\le M.
              \end{cases}
\end{aligned}
\end{equation}
As explained again in the Appendix, such scalings can be obtained for a star and a complete graph of communities (while satisfying Assumptions \ref{assumption_community_migration}--\ref{assumption_external_friend}), when the sender is in community $1$ and $\theta_{1J}>0$ for all $J=2,\ldots, M$.
We assume, without loss of generality, $n_2\theta_{22} \le n_3\theta_{33} \le \ldots \le n_M\theta_{MM}$.
This implies that the sender can attract the first users in community $2,\ldots,J-1$ more easily than community $J$ since they have less friends within community.
Assumption \ref{assumption_migration_order} is thus satisfied.

Simulations are conducted for complete graphs of communities with the sender at different communities.
The model has four communities.
In one simulation setting, the sender is in a small community, and $n_1=20, n_2=30, n_3=40, n_4=50$.
In another, the sender is in a large community, and $n_1=50, n_2=20, n_3=30, n_4=40$.
The other parameters are
\begin{equation}\label{completegraph_theta}
    \theta = 
    \begin{pmatrix}
        \frac{3}{4}      & \frac{4}{n_1n_2} & \frac{4}{n_1n_3} & \frac{4}{n_1n_4} \\
        \frac{4}{n_2n_1} & \frac{3}{4}      & \frac{4}{n_2n_3} & \frac{4}{n_2n_4} \\
        \frac{4}{n_3n_1} & \frac{4}{n_3n_2} & \frac{3}{4}      & \frac{4}{n_3n_4} \\
        \frac{4}{n_4n_1} & \frac{4}{n_4n_2} & \frac{4}{n_4n_3} & \frac{3}{4}      \\
    \end{pmatrix}
\end{equation}
so that the network structure itself remains identical in the two settings.

In Figure \ref{fig:sbm} (b1, b2), the white curves indicate that our mathematical analysis of Proposition \ref{thm_sbm_thresh} is again consistent with the simulation results, which do not presume the Assumptions \ref{assumption_community_migration}--\ref{assumption_external_friend}.
With low $p$, platform $\mathbf{A}$ can enforce strict regulation more easily if the sender is in a large community (panel (b2)) than if it is in a small one (panel (b1)).
This goes along with our qualitative intuition.
When $p$ is low, the sender's influence on nearby agents becomes relatively important. 
Since it is difficult for the sender to persuade a large community into platform $\mathbf{B}$, platform $\mathbf{A}$ can take advantage of this to impose regulation.

\section{Influencer and Neighboring Sympathizers}

The previous section demonstrated  how the strictest effective regulation is affected by network structure. In particular, we saw how the presence of cohesive clusters can preclude new platform adoption (and hence help the platformer impose stricter regulation), and how the location of pivotal users (who determine the power balance between the sender and the platformer) changes depending on information diffusiveness.

In this section, we consider yet another effect, namely heterogeneity in the users' preference for their received signal. More precisely, from now on, we assume users differ in how much they appreciate the news information received from the sender and in how biased they are toward unorthodox views.

This is captured by the payoff $c_i$ of user $i$ (replacing the common value $c$ used up to now).
A user with low $c_i$ gets high utility when they correctly estimate the surprising world state, i.e., $a_i=w=1$.
User $i$ with high $c_i$ gets high utility when they correctly estimate the unsurprising world state, i.e., $a_i=w=0$.
Therefore users with low $c_i$, having high $\beta'$, tend to trust a deceitful sender and bet on the unlikely world state.
Note that with the standing assumption of $0<\mu<c_i<1/2$, their default estimation is still $a_i=0$. 
We call users with low $c_i$ \textit{sympathizers}, and high $c_i$ \textit{non-sympathizers}.

\begin{figure*}
\begin{center}
\includegraphics[width=0.83\textwidth]{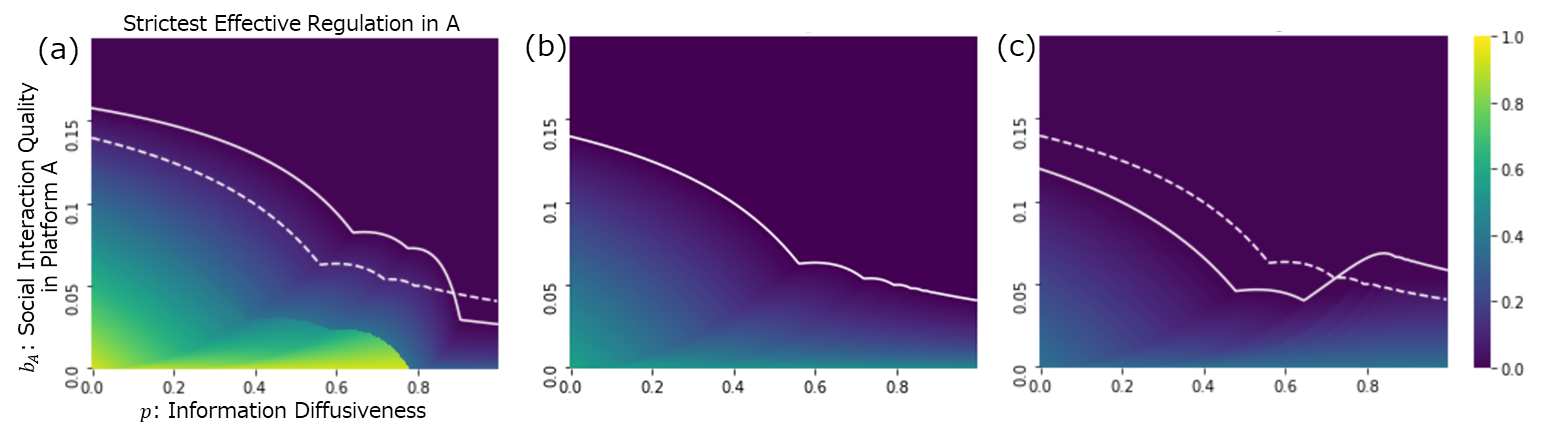}
\vspace{-.4cm}
\caption{The strictest effective regulation $\rho_{SE}$ for infinite linear networks. Panel (b) is the basic case with $c_i=0.3$ for all $i$.
For panel (a), users close to the sender are sympathizers ($c_i=0.21$ for $i=0,1,2$) and distant users are non-sympathizers ($c_i=0.4$ for $i\ge 3$).
Panel (c) is the opposite ($c_i=0.4$ for $i=0,1,2$ and $c_i=0.21$ for $i\ge 3$).
The white curves are the parameter threshold for $\rho_{SE}=0$ as in Proposition \ref{bA_threshold_for_strictest_restriction}, and the dashed curves represent that in the base case (b) for reference. 
In case (a), it is difficult for platform $\mathbf{A}$ to enforce strict regulation (i.e., the color map is bright and the white curve is high) when $p$ is low.
The color map is computed similarly to Figure \ref{fig:regulation}, not using any random procedure.} 
\label{fig:regulation_differentc_linear}



\includegraphics[width=0.83\textwidth]{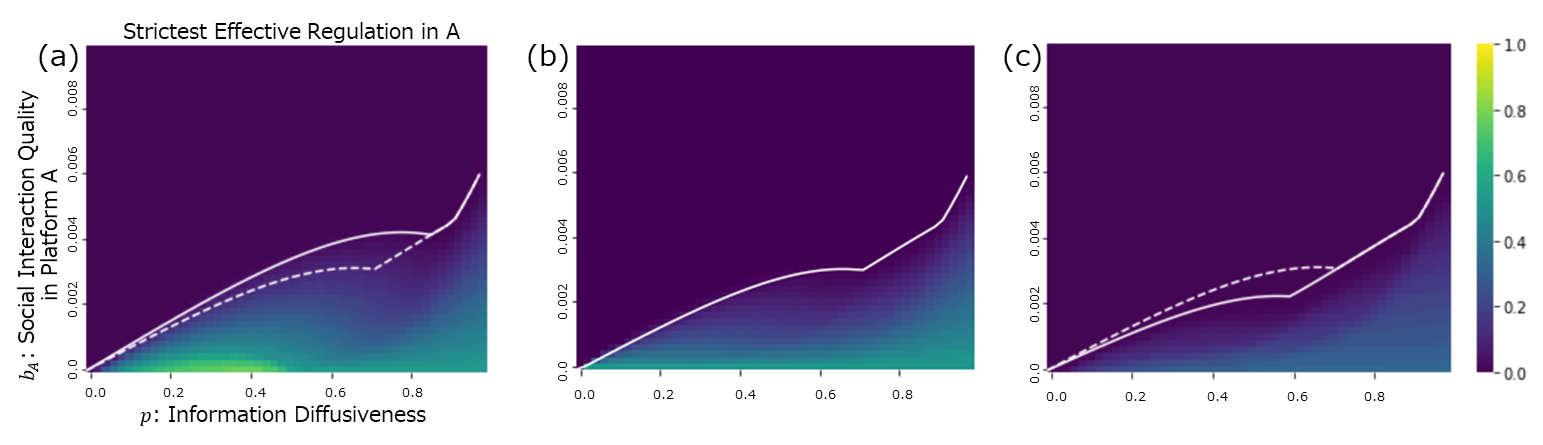}
\vspace{-.4cm}
\caption{The strictest effective regulation $\rho_{SE}$ for a community chain. Panel (b) is the basic case with $c_J=0.3$ for all $J=1,2,3$.
For panel (a), users in the sender's community are sympathizers ($c_1=0.21$) and distant users are the same as the basic case ($c_2=c_3=0.3$).
Panel (c) has non-sympathizers nearby ($c_1=0.4$ and $c_2=c_3=0.3$).
The color map shows the simulation result for 50 samples of the random graph.
The white curves are the parameter threshold for $\rho_{SE}=0$ as in Proposition \ref{thm_sbm_thresh}, and the dashed curves represent that in the base case (b) for reference. 
In case (a), it is difficult for platform $\mathbf{A}$ to enforce strict regulation (i.e., the color map is bright and the white curve is high) when $p$ is low.} 
\label{fig:regulation_differentc_communitychain}
\end{center}
\end{figure*}

\subsection{Sympathizers in Linear Network}

For an infinite linear network, our previous arguments can be reused with minor modifications.
with $c$ replaced by $c_k$.
Therefore, a counterpart of Proposition \ref{bA_threshold_for_strictest_restriction} holds, replacing function $f_K$ in \eqref{linear_thresh_rho0} with
\vspace{-0.2cm}
\begin{equation}
    \tilde{f}_{K}(p) := \mu p^{K} - \frac{\mu c_K p^{K}}{1-p^{K+1}}.
\end{equation}

In the simulation for an infinite linear network, three users close to the sender are sympathizers with $c_i=0.21, i=0,1,2$, and the rest are non-sympathizers with $c_i=0.4, i\ge 3$.
For the purpose of comparison, simulation is conducted also for the opposite pattern, $c_i=0.4, i=0,1,2$, and $c_i=0.21, i\ge 3$.
As can be seen in Figure \ref{fig:regulation_differentc_linear} (a), it is more difficult (compared to the homogeneous case) for the platformer to impose strict regulation when the users closest to the sender are sympathizers and $p$ is low. This can be understood intuitively by noting that, when information is not diffusive, nearby users are relatively important for the sender to decide its strategy, and therefore the nearby sympathizers give it greater power.


When the information is diffusive, on the other hand, users far from the sender are relatively important, and therefore the sender needs to be more truthful to please the distant non-sympathizers (and, hence, the solid curves dips below the dashed on for large $p$ in Figure \ref{fig:regulation_differentc_linear} (a)). 

The pattern observed in panel (a) is reversed in (c), as sympathizers become helpful to the sender when $p$ is high.


\subsection{Community of Sympathizers}

For stochastic block models, we consider that users in the same community have the same payoff, but that different communities have different payoffs.
With a slight abuse of notation, 
let $c_J$ denote the payoff parameter for community $J$. 
Then, the counterpart of Proposition \ref{thm_sbm_thresh} holds as follows.
\begin{prop}
    Under Assumptions \ref{assumption_community_migration}--\ref{assumption_external_friend}, the strictest effective regulation is $\rho_{SE}=0$ if
    \vspace{-0.2cm}
    \begin{equation*}
    n_J \theta_{JJ}b_\mathbf{A}-b_\mathbf{B} 
    \ge \mu (1-c_J)p_{\phi(J)\mathbf{B}} - \frac{\sum_{L=J+1}^M R_L}{\sum_{L=1}^J R_L} \mu c_J p_{\phi(J)\mathbf{B}}
    \end{equation*}
    \vspace{-0.2cm}
    for all $J=1,\ldots,M$.
\end{prop}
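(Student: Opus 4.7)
The plan is to mirror the structure of the proof of Proposition \ref{thm_sbm_thresh} (given in the Appendix), substituting the community-specific payoff parameter $c_J$ wherever the common $c$ appeared, and then tracing through the algebra to recover the stated inequality. By Proposition \ref{strictest_regulation}, the condition $\rho_{SE}=0$ is equivalent to $U_\mathbf{A}(0)\ge U_\mathbf{B}^*$, so the task reduces to bounding the sender's maximum utility on platform $\mathbf{B}$ under the heterogeneous-payoff setting.

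First, I would use Assumptions \ref{assumption_community_migration}--\ref{assumption_migration_order} to reduce the possible equilibrium configurations on $\mathbf{B}$ to one indexed by $J\in\{0,1,\ldots,M\}$, corresponding to communities $1,\ldots,J$ having migrated. For each such $J$, I would identify the pivotal user $\phi(J)$ and write the indifference condition $V_{\phi(J),\mathbf{A}}=V_{\phi(J),\mathbf{B}}$, the only change from the homogeneous case being that the payoff parameter entering $\Psi_{\phi(J),\mathbf{P}}$ is now $c_J$ rather than $c$ (since $\phi(J)$ belongs to community $J$). Using Assumption \ref{assumption_external_friend} to evaluate $N_{\phi(J)\mathbf{A}}=n_J\theta_{JJ}$ and $N_{\phi(J)\mathbf{B}}=1$, solving for the threshold deceitfulness yields
\begin{equation*}
\mu+(1-\mu)\beta_J=\frac{\mu\, p_{\phi(J)\mathbf{B}}-(n_J\theta_{JJ}b_\mathbf{A}-b_\mathbf{B})}{c_J\, p_{\phi(J)\mathbf{B}}}.
\end{equation*}

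Next, I would compute the sender's utility in this configuration as $U_\mathbf{B}^{(J)}=(\mu+(1-\mu)\beta_J)\sum_{L=1}^J R_L$, since exactly the expected $\sum_{L=1}^J R_L$ receivers lie in the sender's platform. Since all users remain on $\mathbf{A}$ when the sender stays there (so $U_\mathbf{A}(0)=\mu\sum_{L=1}^M R_L$), the condition $U_\mathbf{A}(0)\ge U_\mathbf{B}^{(J)}$ for every admissible $J$ becomes, after clearing denominators and applying $\sum_{L=1}^M R_L=\sum_{L=1}^J R_L+\sum_{L=J+1}^M R_L$, exactly the inequality stated in the proposition. Taking the conjunction over all $J=1,\ldots,M$ gives a sufficient condition for $U_\mathbf{A}(0)\ge U_\mathbf{B}^*$ and hence $\rho_{SE}=0$.

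The main obstacle is essentially bookkeeping rather than conceptual: one has to be careful that the indexing convention for $\phi(J)$ and the sum $\sum_{L=1}^J R_L$ matches the configuration in which community $J$ itself is treated as the last community to have moved (so that $\phi(J)$'s indifference, involving $c_J$ and $p_{\phi(J)\mathbf{B}}$, is the binding constraint). Provided this alignment is respected, the homogeneous proof of Proposition \ref{thm_sbm_thresh} carries over verbatim with $c\mapsto c_J$, because heterogeneity in $c_i$ enters only through the single pivotal user in each community and does not otherwise alter the sender's aggregate utility expressions.
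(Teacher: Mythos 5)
Your proposal is correct and follows essentially the same route as the paper: the paper states this proposition as a direct counterpart of Proposition \ref{thm_sbm_thresh}, whose Appendix derivation (pivotal user $\phi(J)$'s indifference condition giving $\beta_J$, then $U_\mathbf{A}(0)\ge U_\mathbf{B}(\beta_J)$ over all equilibrium configurations $J$) carries over verbatim with $c\mapsto c_J$, exactly as you argue. Your intermediate expression for $\mu+(1-\mu)\beta_J$ agrees with the paper's \eqref{beta_j} after substituting $c_J$, and your observation that heterogeneity enters only through the pivotal user's payoff parameter is precisely why the paper omits a separate proof.
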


Simulation is conducted for the previous model of community chain,  $n_1=n_2=n_3=30$ and \eqref{community_chain_theta}.
The basic model has $c_J=0.3$ for all $J=1,2,3$.
Keeping all other parameters fixed, a variant model has $c_1=0.21$, meaning that the sender is in a community of sympathizers.
For comparison, another variant has $c_1=0.4$, corresponding to non-sympathizers in the sender's community.
Figure \ref{fig:regulation_differentc_communitychain} indicates that in the case that the sender has sympathizers nearby, it is difficult for platform $\mathbf{A}$ to enforce strict regulation when $p$ is low. 
In the case that non-sympathizers are nearby, on the other hand, platform $\mathbf{A}$ can enforce strict regulation easily when $p$ is low.
Like linear network cases, when the information is not diffusive, the nearby users are relatively important for the sender to decide its strategy, and therefore the nearby sympathizers (non-sympathizers) bring advantage (disadvantage) to the sender.

\section{Discussion and Conclusion}




This article provided a novel perspective on the design of regulation policies that a dominant platform can enforce without reducing its user base, thus shedding light on the theoretical efficacy of such misinformation-mitigation measures, and the responsibility of mainstream platforms. 

More precisely, we identified the strictest level of regulation a platform can enforce on an information source without losing users to a non-regulated competitor. We focused on simple yet relevant network structures, such as stochastic block models, and used the model to make predictions in a variety of scenarios. In particular, we showed that a tightly connected community prevents new platform adoption thereby helping the currently dominant platform impose strict regulation, especially if the information is diffusive. We also showed that a news source in a large community tends to comply with strict regulation in the popular platform if the information is not diffusive, while a news source in a small community is more likely to relocate to an alternative platform. Finally, a news source surrounded by sympathizers who appreciate the information it emits exerts a greater indirect influence over the platform's content moderation policy than one surrounded by non-sympathizers. However, when information is diffusive, this effect diminishes, allowing the platform to enforce stricter regulations.





\section*{Appendices}

\subsection{Lemmata for Proposition 1}

\begin{lemma}
Suppose $\mathbf{P}_S=\mathbf{B}$.
In the adoption process, users can switch from platform $\mathbf{A}$ to $\mathbf{B}$ but not from $\mathbf{B}$ to $\mathbf{A}$.
\end{lemma}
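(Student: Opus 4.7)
The plan is to proceed by induction on the iteration index, showing that once a user has adopted $\mathbf{B}$, it remains there at every subsequent iteration. The base case is essentially vacuous: all users start on $\mathbf{A}$, so the first occurrence of anyone being on $\mathbf{B}$ can only arise from an $\mathbf{A}\to\mathbf{B}$ switch, never a $\mathbf{B}\to\mathbf{A}$ one. For the inductive step, I will assume that no $\mathbf{B}\to\mathbf{A}$ switch has occurred up to iteration $t$, which is equivalent to saying that the set of $\mathbf{B}$-users has been monotonically expanding through iterations $0,1,\ldots,t$.

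Under this inductive hypothesis, I need to verify that the signed gap $V_{i\mathbf{B}} - V_{i\mathbf{A}}$ seen by an arbitrary user $i$ at iteration $t+1$ is at least as large as what they saw at iteration $t$. Combined with the observation that a user who chose $\mathbf{B}$ at iteration $t$ did so because $V_{i\mathbf{B}} \geq V_{i\mathbf{A}}$ held at the time, this will force $i$ to again prefer $\mathbf{B}$ at iteration $t+1$, closing the induction.

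The verification proceeds by tracking the four ingredients of $V_{i\mathbf{P}}$ separately. Since $\mathbf{P}_S = \mathbf{B}$, no user on $\mathbf{A}$ ever receives the signal, so $p_{i\mathbf{A}} = 0$ and $\Psi_{i\mathbf{A}} = (1-\mu)c$ is frozen; meanwhile $N_{i\mathbf{A}}$ weakly decreases as friends defect, so $\Phi_{i\mathbf{A}}$ weakly decreases. On the $\mathbf{B}$ side, $N_{i\mathbf{B}}$ weakly increases, while the shortest-path distance from the sender to $i$ in the $\mathbf{B}$-subgraph weakly shrinks, so $p_{i\mathbf{B}}$ weakly increases. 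The only non-bookkeeping point is that $\Psi_{i\mathbf{B}}$ must be shown to be weakly increasing in $p_{i\mathbf{B}}$ as well: when the sender picks $\beta > \beta'$ the signal is ignored and $\Psi_{i\mathbf{B}}$ is flat, whereas for $\beta \leq \beta'$ a one-line differentiation of the news-payoff formula yields the derivative $\mu(1-c) - (1-\mu)\beta c$, which is non-negative precisely because $\beta \leq \beta'$ by the definition of $\beta'$.

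The main subtlety, and the only place where the argument is not pure bookkeeping, is ensuring that the diffusion-based term $\Psi$ cannot conspire with the social-interaction term $\Phi$ to reverse the $V_{i\mathbf{B}} \geq V_{i\mathbf{A}}$ inequality as the $\mathbf{B}$-set grows. The analysis above shows that all four components move in a direction that weakly favors $\mathbf{B}$, with the critical input being that $\beta'$ is exactly calibrated so that $\Psi_{i\mathbf{B}}$ is monotone in $p_{i\mathbf{B}}$ throughout the trust regime; this makes the induction go through without further hypotheses.
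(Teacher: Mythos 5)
Your proof is correct and follows essentially the same route as the paper's: induction on the iteration index, with a component-by-component monotonicity argument ($\Phi_{i\mathbf{A}}$ weakly down, $\Phi_{i\mathbf{B}}$ weakly up, $p_{i\mathbf{A}}=0$ frozen, $p_{i\mathbf{B}}$ weakly up) showing the gap $V_{i\mathbf{B}}-V_{i\mathbf{A}}$ can only grow as the $\mathbf{B}$-set expands. The one place you go beyond the paper is in explicitly verifying that $\Psi_{i\mathbf{B}}$ is nondecreasing in $p_{i\mathbf{B}}$ via the derivative $\mu(1-c)-(1-\mu)\beta c\ge 0$ for $\beta\le\beta'$ --- a step the paper asserts without computation, and which is indeed the only non-bookkeeping point.
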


\begin{proof}
We show by induction.
Suppose this statement is true at iteration $t$.
Then, for each user $i$, the number of neighbors in platform $\mathbf{A}$ (resp. $\mathbf{B}$) decreases (increases) or stays the same.
This means $\Phi_{i\mathbf{A}}^{(t)}\ge \Phi_{i\mathbf{A}}^{(t+1)}$, $\Phi_{i\mathbf{B}}^{(t)}\le \Phi_{i\mathbf{B}}^{(t+1)}$. 
Also, $p_{i\mathbf{B}}^{(t)}\le p_{i\mathbf{B}}^{(t+1)}$ because the shortest path in the subgraph for platform $\mathbf{B}$ does not become longer.
This means $\Psi_{i\mathbf{B}}^{(t)}\le \Psi_{i\mathbf{B}}^{(t+1)}$.
Note that $p_{i\mathbf{A}}^{(t)}=p_{i\mathbf{A}}^{(t+1)}=0$ and therefore $\Psi_{i\mathbf{A}}^{(t)}=\Psi_{i\mathbf{A}}^{(t+1)}$.
Hence, $V_{i\mathbf{A}}^{(t)}\ge V_{i\mathbf{A}}^{(t+1)}$ and $V_{i\mathbf{B}}^{(t)}\le V_{i\mathbf{B}}^{(t+1)}$.
Therefore, if $V_{i\mathbf{B}}^{(t)}\ge V_{i\mathbf{A}}^{(t)}$, then $V_{i\mathbf{B}}^{(t+1)}\ge V_{i\mathbf{A}}^{(t+1)}$.
This means if a user chooses $\mathbf{B}$ at iteration $t$, then chooses $\mathbf{B}$ at iteration $t+1$ too.
By induction, the statement is true at any iteration.
\end{proof}

\begin{lemma}
Suppose $\mathbf{P}_S=\mathbf{B}$ and the network is acyclic. 
User $i$ that switches to platform $\mathbf{B}$ at iteration $t$ is $t$ edges away from the sender and has $p_{i\mathbf{B}}^{(t)}=p^t$.
\end{lemma}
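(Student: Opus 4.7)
The plan is to induct on $t$, exploiting the tree structure: each user $i$ at distance $d$ from the sender has a unique parent $j_{d-1}$ on the shortest sender-to-$i$ path, and all other neighbors of $i$ sit in the subtree rooted at $i$.

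For the base case $t=1$, I would work directly from the iteration-$0$ state, in which only the sender is in $\mathbf{B}$. For a would-be switcher at distance $d\ge 2$, no neighbor lies in $\mathbf{B}$ and the unique path to the sender is broken, so $\Phi_{i\mathbf{B}}^{(1)}=0$ and $p_{i\mathbf{B}}^{(1)}=0$; a direct comparison then rules out $V_{i\mathbf{B}}^{(1)}\ge V_{i\mathbf{A}}^{(1)}$. Any iteration-$1$ switcher therefore sits at distance $1$, and its shortest $\mathbf{B}$-subgraph path to the sender has length $1$, giving $p_{i\mathbf{B}}^{(1)}=p$.

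For the inductive step, fix a switcher $i$ at iteration $t$ and let $d$ be its distance from the sender. The upper bound $d\le t$ will be the easier half: by acyclicity the only potentially in-$\mathbf{B}$ neighbor of $i$ is the parent $j_{d-1}$, and if $j_{d-1}$ were still in $\mathbf{A}$ at iteration $t-1$ the base-case argument would again rule out the switch. Hence $j_{d-1}\in\mathbf{B}$ at iteration $t-1$, and applying the induction hypothesis to $j_{d-1}$, which must then have switched at iteration $d-1$, yields $d-1\le t-1$.

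The lower bound $d\ge t$ is where I expect the main obstacle, since I must rule out the possibility that $i$ postpones its move past iteration $d$. The key idea will be to exploit the tree structure in the opposite direction: $i$'s children in the subtree below it can reach the sender only through $i$, so they must remain in $\mathbf{A}$ until $i$ itself moves. Combined with the monotonicity from the preceding lemma, this implies that $i$'s neighborhood configuration at iteration $d-1$ is identical to its configuration at iteration $t-1$; the comparison $V_{i\mathbf{B}}$ versus $V_{i\mathbf{A}}$ therefore has the same outcome at iterations $d$ and $t$. Since $i$ strictly prefers $\mathbf{B}$ at iteration $t$, it would already have switched at iteration $d$, forcing $t\le d$. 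Combining the two bounds gives $d=t$, and $p_{i\mathbf{B}}^{(t)}=p^t$ follows at once from the length of the unique sender-to-$i$ path.
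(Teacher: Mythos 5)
Your argument is correct and rests on the same two structural facts as the paper's proof --- acyclicity gives each user a unique upstream parent, and the subtree below a user is cut off from the sender until that user itself moves --- but the induction is organized differently. The paper reasons about the frontier as a whole: relocating the iteration-$t$ switchers $S_t$ changes $N_{j\mathbf{P}}$ or $p_{j\mathbf{B}}$ only for downstream neighbors of $S_t$ (and for users already in $\mathbf{B}$), and an unaffected user who declined to switch at iteration $t$ declines again at $t+1$; hence the new switchers at $t+1$ are downstream neighbors of $S_t$ and therefore sit at distance $t+1$, which delivers both directions of the distance claim in one stroke. You instead fix a switcher and sandwich its distance $d$: the bound $d\le t$ comes from applying the induction hypothesis to the parent, and $d\ge t$ from noting that the user's neighborhood configuration is identical at iterations $d$ and $t$, so it would already have switched at iteration $d$. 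These are the same monotone-environment observation deployed in two directions, so neither route is more general; the paper's is slightly more economical, while yours makes the one-hop-per-iteration mechanism explicit for an individual user. One point to tighten: the claim that the subtree below $i$ stays in $\mathbf{A}$ until $i$ moves is itself a short sub-induction (such a user has no neighbors in $\mathbf{B}$ and $p_{j\mathbf{B}}=0$, so it strictly prefers $\mathbf{A}$ provided $b_\mathbf{A}>0$), and you already need it in the upper-bound step --- to guarantee that the only possible $\mathbf{B}$-neighbor triggering $i$'s switch is its parent --- not only in the lower-bound step where you first invoke it.
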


\begin{proof}
We show by induction.
Suppose the statement is true for $1,2,\ldots,t$.

Let $S_t$ be the set of users who switched to $\mathbf{B}$ at iteration $t$. 
Since the graph is acyclic, each user has only one neighbor upstream (closer to the sender) and the other neighbors downstream (farther from the sender).
Therefore, the upstream neighbors of $S_t$ is already in $\mathbf{B}$ at iteration $t$ (otherwise, $p_{i\mathbf{B}}^{(t)}=0$ for $i\in S_t$, which contradicts the assumption).
Therefore, the platform adoption of $S_t$ affects $N_{j\mathbf{P}}$ only if user $j$ is a downstream neighbor of $S_t$ or is already in $\mathbf{B}$.

By assumption, all users more than $t$ edges away from the sender are in $\mathbf{A}$ at iteration $t$.
Therefore, the platform adoption of $S_t$ affects $p_{j\mathbf{B}}$ only if user $j$ is a downstream neighbor of $S_t$.

Therefore, the platform adoption of $S_t$ affects $V_{j\mathbf{P}}$ only if user $j$ is a downstream neighbor of $S_t$ or is already in $\mathbf{B}$.
Therefore, if a user switches to $\mathbf{B}$ at iteration $t+1$, then it is a downstream neighbor of $S_t$.
A downstream neighbor of $S_t$ is $t+1$ edges away from the sender and has $p_{i\mathbf{B}}^{(t+1)}=p^{t+1}$.

By induction, the statement holds for any $t$.
\end{proof}

\subsection{Finite Linear Network}

We have
\vspace{-0.2cm}
\begin{equation}
\begin{aligned}
    U_\mathbf{A}(\beta) 
    = \sum_{k=0}^{n-1}(\mu +(1-\mu)\beta)p^k,\\
    U_\mathbf{B}(\beta) 
    = \sum_{k=0}^{K}(\mu +(1-\mu)\beta)p^k 
\end{aligned}
\end{equation}
with $K<n-2$ satisfying
\begin{equation}
    V_{K+1,B} < b_\mathbf{A} +(1-\mu)c \le V_{K,B}
\end{equation}
or $K=n-1$ if $b_\mathbf{A} +(1-\mu)c \le V_{n-2,B}$.
For the best strategy in platform B, the sender takes $\beta^*_\mathbf{B}=F(K)$ for $K<n-2$ and $\beta^*_\mathbf{B}=F(n-2)$ for $K=n-1$.

The strictest effective regulation $\rho_{SE}$ takes the value of zero if and only if $U_\mathbf{A}(\beta=0)\ge U_\mathbf{B}(\beta^*_\mathbf{B})$, i.e.,
\vspace{-0.2cm}
\begin{equation*}
\begin{aligned}
    &\sum_{k=0}^{n-1}\mu p^k \ge \sum_{k=0}^{K} (\mu +(1-\mu)F(K))p^k
    &\text{for }K<n-2 \\
    &\sum_{k=0}^{n-1}\mu p^k \ge \sum_{k=0}^{n-1} (\mu +(1-\mu)F(n-2))p^k
    &\text{for }K=n-1.
\end{aligned}
\end{equation*}
This leads to Proposition \ref{bA_threshold_for_strictest_restriction_finite}.

\subsection{Derivation of Proposition \ref{thm_sbm_thresh}}

Suppose the sender chooses platform $\mathbf{B}$.
The critical situation for community $J$'s platform adoption is that the first-adopting user $\phi(J)$ chooses a platform when a friend from community $1,\ldots,J-1$ is in platform $\mathbf{B}$ and other friends in community $J$ are in $\mathbf{A}$.
This user compares 
\begin{equation*}
\begin{aligned}
    V_{\phi(J)\mathbf{A}} &= n_J \theta_{JJ}b_\mathbf{A} + (1-\mu)c \quad \mbox{ and}\\
    V_{\phi(J)\mathbf{B}} &= b_\mathbf{B} + \mu p_{\phi(J)\mathbf{B}}(1-c) + (1-\mu)(1-\beta p_{\phi(J)\mathbf{B}})c.
\end{aligned}
\end{equation*}
Hence, users in community $J$ chooses $\mathbf{B}$ if $\beta\le\beta_J$ where 
\vspace{-0.2cm}
\begin{equation}\label{beta_j}
    \beta_J := \frac{1}{(1-\mu)c}\bigg(\mu(1-c) - \frac{n_J \theta_{JJ}b_\mathbf{A}-b_\mathbf{B}}{p_{\phi(J)\mathbf{B}}}\bigg).
\end{equation}
This $\beta_J$ is the sender's best strategy assuming that communities $1,\ldots,J$ are in platform $\mathbf{B}$ and $J+1, \ldots, M$ are in $\mathbf{A}$ at equilibrium. 
Then, the strictest effective regulation $\rho_{SE}$ satisfies $U_\mathbf{A}(\rho_{SE}) \ge U_\mathbf{B}(\beta_J)$, i.e.,
\vspace{-0.2cm}
\begin{equation}
    (\mu+(1-\mu)\rho_{SE}) \sum_{L=1}^M R_L 
    \ge (\mu+(1-\mu)\beta_J) \sum_{L=1}^J R_L.
\end{equation}
Plugging in $\rho_{SE}=0$ and the value of $\beta_J$, we obtain the following necessary and sufficient condition for $\rho_{SE}=0$ at the equilibrium where communities $1,\ldots, J-1$ are in Platform $\mathbf{B}$ and the rest are in $\mathbf{A}$:
\vspace{-0.2cm}
\begin{equation*}
    n_J \theta_{JJ}b_\mathbf{A}-b_\mathbf{B} 
    \ge \mu (1-c)p_{\phi(J)\mathbf{B}} - \frac{\sum_{L=J+1}^M R_L}{\sum_{L=1}^J R_L} \mu c p_{\phi(J)\mathbf{B}}.
\end{equation*}

Considering all possible assignments of communities to platforms at equilibrium, we achieve Proposition \ref{thm_sbm_thresh}.

\subsection{Numerical Validation of Assumption \ref{assumption_community_migration}}\label{appendix_validation}

To justify Assumption \ref{assumption_community_migration} for Proposition \ref{thm_sbm_thresh}, we conduct simulations, taking community chains and complete graphs of communities for example.

For variants of the base model of community chain ($n_1=n_2=n_3=30$ and \eqref{community_chain_theta}), instead of $\theta_{JJ}=3/4$, we consider $\theta_{JJ}=1/4, 1/8, 1/16$.
Figure \ref{fig:communitychain_num_userB} shows the number of users who choose platform $\mathbf{B}$ when platform $\mathbf{A}$ sets too strict regulation $\rho_\mathbf{A}<\rho_{SE}$.
For a large $\theta_{JJ}$, the number of users in platform $\mathbf{B}$ is 0, 30, 60, or 90, indicating that all users in a community choose the same platform.
On the other hand, for a small $\theta_{JJ}$, users in a community do not necessarily choose the same platform.
The simulation result is for a single sample for the random networks.

Similar results are achieved in simulation for the complete graph of communities.

\begin{figure*}
\centering
\includegraphics[width=1.00\textwidth]{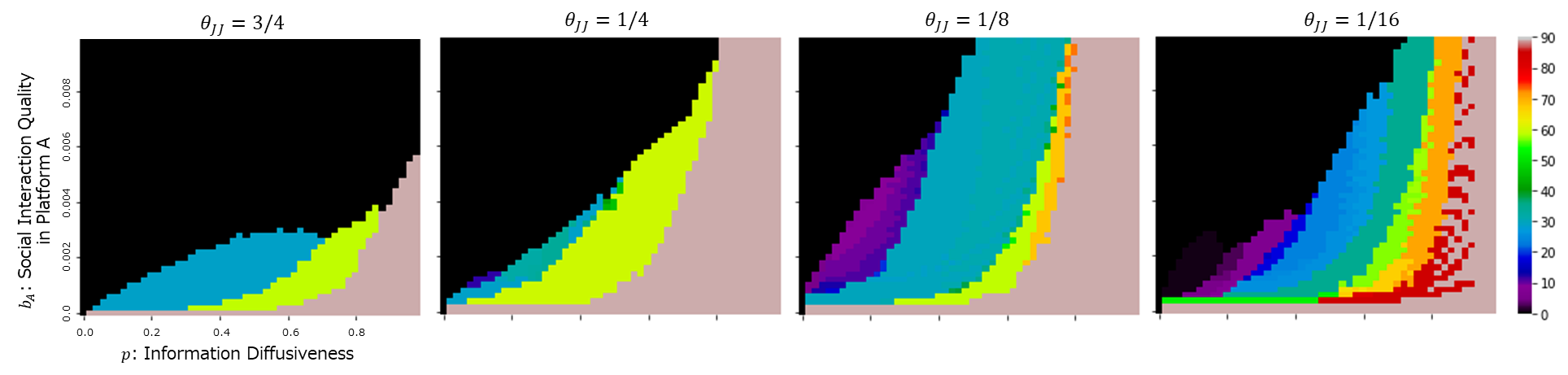}
\vspace{-.8cm}
\caption{The number of users who choose platform $\mathbf{B}$ when platform $\mathbf{A}$ imposes too strict regulation $\rho_\mathbf{A}<\rho_{SE}$.
The network is a community chain.
For a large $\theta_{JJ}$, the number of users in platform $\mathbf{B}$ is 0, 30, 60, or 90, indicating that all users in a community choose the same platform.
On the other hand, for a small $\theta_{JJ}$, users in a community do not necessarily choose the same platform.} 
\label{fig:communitychain_num_userB}

\end{figure*}

To discuss this property of community migration, we introduce new metrics.
Let $N(J,\mathbf{P})$ be the number of users in community $J$ who choose platform $\mathbf{P}$.
Then, define the number of users' irregular choices as 
\vspace{-0.2cm}
\begin{equation}
    \sum_{J=1}^M \min \{ N(J,\mathbf{A}), N(J,\mathbf{B})\}.
\end{equation}
This takes the value of zero when all users in each community choose the same platform, and the maximum value $\sum_J n_J/2$ when half users in each community choose a platform different from the other half.

Figure \ref{fig:sbm_num_userB_50samples} presents the number of users' irregular choices.
The simulation is conducted for 50 samples, fixing $p=0.7$ and $b_\mathbf{A}=0.002$.
As in Figures \ref{fig:communitychain_num_userB}, 
the results consistently indicate that when users are tightly connected within community (i.e., $\theta_{JJ}$ is high), all users in a community choose the same platform.
This simulation result supports Assumption \ref{assumption_community_migration} for Proposition \ref{thm_sbm_thresh}.

\begin{figure}
\begin{center}
\includegraphics[width=5.9cm]{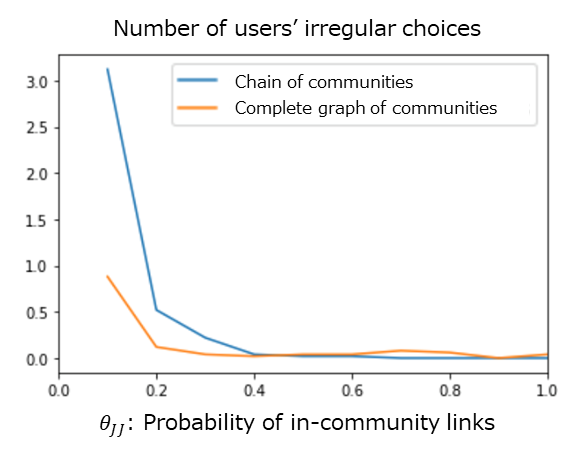}    
\vspace{-.4cm}
\caption{The number of users' irregular choices. The value of 0 means that all users in a community choose the same platform.} 
\label{fig:sbm_num_userB_50samples}
\end{center}
\end{figure}

\subsection{Community Structures Applicable to Proposition \ref{thm_sbm_thresh}}

\subsubsection{Community Chain}

The chain of communities is introduced as a physical model that satisfies \eqref{communitychain_condition}.

Heuristically, a possible way to obtain such a scaling is by considering a community chain 
with 
\begin{equation}\label{theta_jj}
    \theta_{JJ} \approx 1,
\end{equation}
\begin{equation}\label{theta_adj}
    n_J\theta_{JJ'} < 1 < n_J n_{J'} \theta_{JJ'} \ \text{  for  }\  J'=J-1, J+1,  
\end{equation}
\begin{equation}\label{theta_doubleadj}
    n_J n_{J-1} \theta_{JJ-1} n_{J+1}\theta_{JJ+1} < 1.
\end{equation}
Roughly speaking, \eqref{theta_jj} means that most user pairs in the same community are friends;
\eqref{theta_adj} means that although communities are connected, most users do not have friends in other communities;
\eqref{theta_doubleadj} means that in most cases, no user in community $J$ has friends both in community $J-1$ and community $J+1$.
Under these conditions, it should be intuitive that most users are two edges away from the nearest neighbor in an adjacent community, thus leading to \eqref{communitychain_condition}.

Assumption \ref{assumption_community_migration} is validated for \eqref{theta_jj} in simulation in Appendix D. 
The chain structure of the network guarantees Assumption \ref{assumption_migration_order}, and Assumption \ref{assumption_external_friend} is satisfied due to \eqref{theta_adj}.

\subsubsection{Star/Complete Graph of Communities}

In order to obtain \eqref{communitystar}, consider a stochastic block model with the sender in community 1 and $\theta_{1J}>0$ for all $J=2,\ldots, M$.
In addition to \eqref{theta_jj}, we assume
\begin{equation}\label{theta_1j}
    n_1\theta_{1J} < 1,\ n_J\theta_{1J} < 1,\  n_1 n_J \theta_{1J} > 1,
\end{equation}
\begin{equation}\label{theta_jk}
    \theta_{JJ'} \approx 0 \ \text{ for }\  J\neq 1, J'\neq 1, J\neq J'.
\end{equation}
Equation \eqref{theta_1j} means that in most cases, a user in community $J$ (resp. 1) does not have friends in community 1 (resp. $J$), but there are some inter-community links between 1 and $J$.
Equation \eqref{theta_jk} means that edges between peripheral communities do not exist or are negligible. 
These assumptions indicate that the first-adopting user $\phi(J)$ in community $J\ge 2$ is two-edges away from the sender and most others are three-edges away; hence \eqref{communitystar}.


Assumption \ref{assumption_community_migration} is validated for \eqref{theta_jj} in simulation in Appendix D. 
Assumption \ref{assumption_migration_order} is satisfied because $n_2\theta_{22} < n_3\theta_{33} < n_4\theta_{44}$.
The combination of \eqref{theta_1j} and \eqref{theta_jk} satisfies Assumption \ref{assumption_external_friend}.


\Urlmuskip=0mu plus 1mu
\bibliographystyle{IEEEtran}
\bibliography{brief}

\begin{IEEEbiography}
    [{\includegraphics[width=1in,height=1.25in,clip,keepaspectratio]{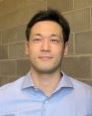}}]{So Sasaki}
is working toward the Ph.D. degree with the University of Illinois at Urbana–Champaign, Urbana, IL, USA.  He is currently with the Decision and Control Group, Coordinated Science Lab, University of Illinois at Urbana–Champaign.  His research interests include game, optimization, control, graph theory, and machine learning, with application to social media.
\end{IEEEbiography}

\begin{IEEEbiography}
    [{\includegraphics[width=1in,height=1.25in,clip,keepaspectratio]{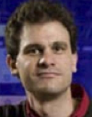}}]{Cédric Langbort}
 received the Ph.D. degree in Theoretical and Applied Mechanics from Cornell University, Ithaca, NY, USA, in 2005. He is currently a Professor of Aerospace Engineering with the University of Illinois at Urbana–Champaign, Champaign, IL, USA. He is also with the Decision and Control Group, Coordinated Science Lab (CSL), where he works on the applications of control, game, and optimization theory to a variety of fields.
\end{IEEEbiography}

\end{document}